\theoremstyle{plain}
\newtheorem{theorem}{Theorem}[section]
\newtheorem{lemma}[theorem]{Lemma}
\theoremstyle{definition}
\theoremstyle{remark}
\newcolumntype{P}[1]{>{\hspace{1em}}p{#1}<{\hspace{1em}}}
\newcolumntype{L}[1]{>{\raggedright\let\newline\\\arraybackslash\hspace{0pt}}m{#1}}
\definecolor{lightblue}{HTML}{18282e}
\definecolor{lighterblue}{HTML}{f2fafd}  % more blue: e4f4fa
\newtcolorbox{abox}{colback=lighterblue,colframe=lightblue}
\newtheorem{thm}{Theorem}
\definecolor{dark-blue}{rgb}{0.15,0.15,0.4}
\DeclareMathOperator{\vect}{vec}
\def \bx{\boldsymbol{x}}
\def \by{\boldsymbol{y}}
\def \bz{\boldsymbol{z}}
\def \bm{\boldsymbol{m}}
\def \bb{\boldsymbol{b}}
\def \bm{\boldsymbol{m}}
\def \bA{\boldsymbol{A}}
\def \bC{\boldsymbol{C}}
\def \bZ{\boldsymbol{Z}}
\def\Pr{{\mathbb P}}
\def\Indic{\mathbbm{1}}
\newcommand{\Aff}{\text{Aff}}
\newcommand{\T}{^\top}
\newcommand{\xp}{x^{+}}
\newcommand{\xpp}{x^{++}}
\newcommand{\xn}{x^{-}}
\newcommand{\tYcal}{{\tilde \Ycal}}
\newcommand{\hp}{{\hat p}}
\newcommand{\hZ}{{Z_\bS}}
\newcommand{\yp}{y^{+}}
\newcommand{\tW}{{\tilde W}}
\newcommand{\bbx}{{\bar x}}
\newcommand{\bS}{{\bar S}}
\newcommand{\tZ}{{Z_S}}
\newcommand{\bby}{{\bar y}}
\def\eqref#1{equation~\ref{#1}}
\def\1{\bm{1}}
\def\eps{{\epsilon}}
\DeclareMathAlphabet{\mathsfit}{\encodingdefault}{\sfdefault}{m}{sl}
\SetMathAlphabet{\mathsfit}{bold}{\encodingdefault}{\sfdefault}{bx}{n}
\newcommand{\R}{\mathbb{R}}
\DeclareMathOperator*{\argmin}{arg\,min}
\title{An Information-Theoretic Perspective on\\Variance-Invariance-Covariance Regularization}
\author{%
  \hspace*{35pt}Ravid Shwartz-Ziv\thanks{Correspondence to: \href{mailto:ravid.shwartz.ziv@nyu.edu}{\texttt{ravid.shwartz.ziv@nyu.edu}}.} \\
  \hspace*{35pt}New York University \\
  % ~
  \And
  Randall Balestriero \\
  Meta AI, FAIR \\
  % ~
  \And
  Kenji Kawaguchi \\
  National University of Singapore \\
  % ~
  \And
  \hspace*{35pt}Tim G. J. Rudner \\
  \hspace*{35pt}New York University \\
  ~
  \And
  Yann LeCun \\
  New York University \& Meta AI, FAIR \\
  ~
}
\begin{document}

\maketitle

\begin{abstract}
Variance-Invariance-Covariance Regularization (VICReg) is a self-supervised learning (SSL) method that has shown promising results on a variety of tasks. However, the fundamental mechanisms underlying VICReg remain unexplored. In this paper, we present an information-theoretic perspective on the VICReg objective. We begin by deriving information-theoretic quantities for deterministic networks as an alternative to unrealistic stochastic network assumptions. We then relate the optimization of the VICReg objective to mutual information optimization, highlighting underlying assumptions and facilitating a constructive comparison with other SSL algorithms and derive a generalization bound for VICReg, revealing its inherent advantages for downstream tasks. Building on these results, we introduce a family of SSL methods derived from information-theoretic principles that outperform existing SSL techniques. 
\end{abstract}

\section{Introduction}

\label{sec:intro}

Self-supervised learning (SSL) is a promising approach to extracting meaningful representations by optimizing a surrogate objective between inputs and self-generated signals. For example, Variance-Invariance-Covariance Regularization (VICReg)~\citep{bardes2021vicreg}, a widely-used SSL algorithm employing a de-correlation mechanism, circumvents learning trivial solutions by applying variance and covariance regularization. 

Once the surrogate objective is optimized, the pre-trained model can be used as a feature extractor for a variety of downstream supervised tasks such as image classification, object detection, instance segmentation, or pose estimation \citep{caron2021emerging, chen2020simple, misra2020self,shwartz2022pre}. Despite the promising results demonstrated by SSL methods, the theoretical underpinnings explaining their efficacy continue to be the subject of investigation \citep{arora2019theoretical, lee2021predicting}.

Information theory has proved a useful tool for improving our understanding of deep neural networks (DNNs), having a significant impact on both applications in representation learning \citep{alemi2016deep} and theoretical explorations \citep{shwartz2022information, xu2017information}. However, applications of information-theoretic principles to SSL have made unrealistic assumptions, making many existing information-theoretic approaches to SSL of limited use. One such assumption is to assume that the DNN to be optimized is stochastic---an assumption that is violated for the vast majority DNNs used in practice. For a comprehensive review on this topic, refer to the work by \citet{shwartzziv2023}.

In this paper, we examine Variance-Invariance-Covariance Regularization (VICReg), an SSL method developed for deterministic DNNs, from an information-theoretic perspective.
We propose an approach that addresses the challenge of mutual information estimation in deterministic networks by transitioning the randomness from the networks to the input data---a more plausible assumption. This shift allows us to apply an information-theoretic analysis to deterministic networks. To establish a connection between the VICReg objective and information maximization, we identify and empirically validate the necessary assumptions. Building on this analysis, we describe differences between different SSL algorithms from an information-theoretic perspective and propose a new family of plug-in methods for SSL.
This new family of methods leverages existing information estimators and achieves state-of-the-art predictive performance across several benchmarking tasks.
Finally, we derive a generalization bound that links information optimization and the VICReg objective to downstream task performance, underscoring the advantages of VICReg.

Our key contributions are summarized as follows:\vspace*{-8pt}
\begin{enumerate}[leftmargin=20pt]
\setlength\itemsep{-1pt}
\item We introduce a novel approach for studying deterministic deep neural networks from an information-theoretic perspective by shifting the stochasticity from the networks to the inputs using the Data Distribution Hypothesis (Section~\ref{sec:theory})

\item We establish a connection between the VICReg objective and information-theoretic optimization, using this relationship to elucidate the underlying assumptions of the objective and compare it to other SSL methods (Section~\ref{sec:recovery}).

\item We propose a family of information-theoretic SSL methods, grounded in our analysis, that achieve state-of-the-art performance (Section~\ref{sec:optimizemi}).

\item We derive a generalization bound that directly links VICReg to downstream task generalization, further emphasizing its practical advantages over other SSL methods (Section~\ref{sec:gen}).

\end{enumerate}

\section{Background \& Preliminaries}
\label{sec:background}

We first introduce the necessary technical background for our analysis.

\subsection{Continuous Piecewise Affine (CPA) Mappings.}
%\paragraph{Continuous Piecewise Affine (CPA) Mappings.}%$~$
A rich class of functions emerges from piecewise polynomials: spline operators.
In short, given a partition $\Omega$ of a domain $\mathbb{R}^D$, a spline of order $k$ is a mapping defined by a polynomial of order $k$ on each region $\omega \in \Omega$ with continuity constraints on the entire domain for the derivatives of order $0$,\dots,$k-1$.
As we will focus on affine splines ($k=1$), we only define this case for clarity.
A $K$-dimensional affine spline $f$ produces its output via
%\begin{align}
% \SwapAboveDisplaySkip
$\label{eq:CPA}
    f(\bz)
    =
    \sum\nolimits_{\omega \in \Omega}(\bA_{\omega}\bz+\bb_{\omega})\Indic_{\{\bz \in \omega\}} $,
%\end{align}
with input $\bz\in\mathbb{R}^{D}$ and $\bA_{\omega}\in\mathbb{R}^{K \times D},\bb_{\omega}\in\mathbb{R}^{K},\forall \omega \in \Omega$ the per-region {\em slope} and {\em offset} parameters respectively, with the key constraint that the entire mapping is continuous over the domain $f\in\mathcal{C}^{0}(\mathbb{R}^D)$. 

\paragraph{Deep Neural Networks as CPA Mappings.}%$~$
A deep neural network (DNN) is a (non-linear) operator $f_\Theta$ with parameters $\Theta$ that map a {\em input} $\bx\in\R^D$ to a {\em prediction} $\by \in \R^K$. The precise definitions of DNN operators can be found in~\citet{goodfellow2016deep}.
To avoid cluttering notation, we will omit $\Theta$ unless needed for clarity.
For our analysis, we only assume that the non-linearities in the DNN are CPA mappings---as is the case with (leaky-) ReLU, absolute value, and max-pooling operators.
The entire input-output mapping then becomes a CPA spline with an implicit partition $\Omega$, the function of the weights and architecture of the network~\citep{montufar2014number,balestriero2018spline}.
For smooth nonlinearities, our results hold using a first-order Taylor approximation argument. 

\subsection{Self-Supervised Learning.} 
SSL is a set of techniques that learn representation functions from unlabeled data, which can then be adapted to various downstream tasks. While supervised learning relies on labeled data, SSL formulates a proxy objective using self-generated signals. The challenge in SSL is to learn useful representations without labels. It aims to avoid trivial solutions where the model maps all inputs to a constant output~\citep{ben2023reverse,jing2022understanding}. To address this, SSL utilizes several strategies. {\em{Contrastive methods}} like SimCLR and its InfoNCE criterion learn representations by distinguishing positive and negative examples~\citep{chen2020simple, oord2018representation}. In contrast, {\em non-contrastive} methods apply regularization techniques to prevent the  collapse~\citep{caron2020unsupervised,chen2021exploring,grill2020bootstrap}.

\paragraph{Variance-Invariance-Covariance Regularization (VICReg).}
A widely used   SSL method for training joint embedding architectures~\citep{bardes2021vicreg}. Its loss objective is composed of three terms: the invariance loss, the variance loss, and the covariance loss:\vspace*{-5pt}
\begin{itemize}[leftmargin=20pt]
\item
{\em Invariance loss:}
The invariance loss is given by the mean-squared Euclidean distance between pairs of embedding and ensures consistency between the representation of the original and augmented inputs.

\item
{\em Regularization:}
The regularization term consists of two loss terms: the \textbf{variance loss}---a hinge loss to maintain the standard deviation (over a batch) of each variable of the embedding---and the \textbf{covariance loss}, which penalizes off-diagonal coefficients of the covariance matrix of the embeddings to foster decorrelation among features.
\end{itemize}

VICReg generates two batches of embeddings, $\bZ=\left[f(\bx_1),\dots,f(\bx_B)\right]$ and $\bZ^\prime=\left[f(\bx'_1),\dots,f(\bx'_B) \right]$, each of size $(B \times K)$. Here, $\bx_i$ and $\bx'_i$ are two distinct random augmentations of a sample $I_i$. The covariance matrix $\bC \in \mathbb{R}^{K \times K}$ is obtained from $[\bZ,\bZ']$.
The VICReg loss can thus be expressed as follows:
\begin{align}
    \mathcal{L}
    =
    \underbrace{\color{black}{\frac{1}{K}\sum_{k=1}^K\hspace{-0.1cm}\left(\hspace{-0.1cm}\alpha\max \left(0, \gamma- \sqrt{\bC_{k,k} +\epsilon}\right)\hspace{-0.1cm}+\hspace{-0.1cm}\beta \sum_{k'\neq k}\hspace{-0.1cm}\left(\bC_{k,k'}\right)^2\hspace{-0.1cm}\right)}}_{\textcolor{black}{\text{Regularization}}}+ \underbrace{\color{black}{\eta\| \bZ-\bZ'\|_F^2/N}}_{\textcolor{black}{\text{Invariance}}}.
\end{align}

\subsection{Deep Neural Networks and Information Theory}

Recently, information-theoretic methods have played an essential role in advancing deep learning by developing and applying information-theoretic estimators and learning principles to DNN training~ \citep{alemi2016deep,belghazi2018mine,hjelm2018learning, piran2020dual, shwartz2017opening, shwartz2018representation, steinke2020reasoning,xu2017information}.
However, information-theoretic objectives for deterministic DNNs often face a common issue: the mutual information between the input and the DNN representation is infinite. This leads to ill-posed optimization problems.

Several strategies have been suggested to address this challenge. One involves using stochastic DNNs with variational bounds, where the output of the deterministic network is used as the parameters of the conditional distribution~\citep{lee2021compressive,shwartz2020information}. Another approach, as suggested by \citet{dubois2021lossy}, assumes that the randomness of data augmentation among the two views is the primary source of stochasticity in the network. However, these methods assume that randomness comes from the DNN, contrary to common practice. Other research has presumed a random input but has made no assumptions about the network's representation distribution properties. Instead, it relies on general lower bounds to analyze the objective~\citep{wang2020understanding, zimmermann2021contrastive}.

\section{Self-Supervised Learning in DNNs: An Information-Theoretic Perspective}
\label{sec:theory}

To analyze information within deterministic networks, we first need to establish an information-theoretic perspective on SSL (\Cref{sec:info_max}). Subsequently, we utilize the Data Distribution Hypothesis (\Cref{sec:manifold}) to demonstrate its applicability to deterministic SSL networks.

\subsection{Self-Supervised Learning from an Information-Theoretic Viewpoint}
\label{sec:info_max}

Our discussion begins with the \textit{MultiView InfoMax principle}, which aims to maximize the mutual information $I(Z;X^\prime)$ between a view $X^\prime$ and the second representation $Z$. As demonstrated in \citet{federici2020learning}, we can optimize this information by employing the following lower bound:
\begin{align}
\label{eq:lower_bound3}
I(Z,X^\prime) = H(Z) - H(Z|X^\prime) \geq H(Z) + \EE_{x^\prime}[\log q(z|x^\prime)] .
\end{align}
Here, $H(Z)$ represents the entropy of $Z$. In supervised learning, the labels $Y$ remain fixed, making the entropy term $H(Y)$ a constant. Consequently, the optimization is solely focused on the log-loss, $\EE_{x^\prime}[\log q(z|x^\prime)]$, which could be either cross-entropy or square loss.

However, for joint embedding networks, a degenerate solution can emerge, where all outputs ``collapse'' into an undesired value~\citep{chen2020simple}. Upon examining~\Cref{eq:lower_bound3}, we observe that the entropies are not fixed and can be optimized. As a result, minimizing the log loss alone can lead the representations to collapse into a trivial solution and must be regularized. 

\subsection{Understanding the Data Distribution Hypothesis}
\label{sec:manifold}

Previously, we mentioned that a naive analysis might suggest that the information in deterministic DNNs is infinite.
To address this point, we investigate whether assuming a dataset is a mixture of Gaussians with non-overlapping support can provide a manageable distribution over the neural network outputs.
This assumption is less restrictive compared to assuming that the neural network itself is stochastic, as it concerns the generative process of the data, not the model and training process.
For a detailed discussion about the limitations of assuming stochastic networks and a comparison between stochastic networks vs stochastic input, see \Cref{app:limitation_random}. In \Cref{sec:ass_val}, we verify that this assumption holds for real-world datasets.

The so-called manifold hypothesis allows us to treat any point as a Gaussian random variable with a low-rank covariance matrix aligned with the data's manifold tangent space~\citep{fefferman2016testing}, which enables us to examine the conditioning of a latent representation with respect to the mean of the observation, i.e., $X|\bx^* \sim \mathcal{N}(\bx ; \bx^\ast,\Sigma_{\bx^*})$. Here, the eigenvectors of $\Sigma_{\bx^*}$ align with the tangent space of the data manifold at $\bx^*$, which varies with the position of  $\bx^*$ in space.
In this setting, a dataset is considered a collection of distinct points ${\bx^*_n,n=1,...,N}$, and the full data distribution is expressed as a sum of Gaussian densities with low-rank covariance, defined as:
\begin{align}
    X \sim \sum_{n=1}^N\mathcal{N}(\bx^*_n,\Sigma_{\bx^*_n})^{\mathbb{I}_{\{T=n\}}} \quad \textrm{with} \quad T \sim {\rm Cat}(N) .
\end{align}
Here,  $T$ is a uniform Categorical random variable. For simplicity, we assume that the effective support of $\mathcal{N}(\bx^*_i,\Sigma_{\bx^*_i})$ do not overlap (for empirical validation of this assumption see \Cref{sec:ass_val}). The  effective support is defined as $\{x \in \mathbb{R}^D:p(x)>\eps\}$. We can then approximate the density function as follows:
\begin{align}
\SwapAboveDisplaySkip
p(\bx)
\approx
\mathcal{N}\left(\bx;\bx^*_{n(\bx)},\Sigma_{\bx^*_{n(\bx)}}\right)/N,\label{eq:x_density}
\end{align}
where $\mathcal{N}\left(\bx;.,.\right)$ is the Gaussian density at $\bx$ and  $n(\bx)=\argmin_n (\bx-\bx^*_n )^T\Sigma_{\bx^*_n}(\bx-\bx^*_n )$.

\subsection{Data Distribution Under the Deep Neural Network Transformation}
\label{sec:output_density}

Let us consider an affine spline operator $f$, as illustrated in \Cref{eq:CPA}, which maps a space of dimension $D$ to a space of dimension $K$, where $K \geq D$. The image or the span of this mapping  is expressed as follows:
\begin{align}
    \text{Im}(f)\triangleq\{f(\bx):\bx \in \mathbb{R}^D\}=\bigcup\nolimits_{\omega \in \Omega} \Aff(\omega;\bA_{\omega},\bb_{\omega})\hspace*{-2pt}
    \label{eq:generator_mapping}
\end{align}
In this equation,  $\Aff(\omega;\bA_{\omega},\bb_{\omega})=\{\bA_{\omega}\bx+\bb_{\omega}:\bx\in \omega\}$ denotes the affine transformation of region $\omega$ by the per-region parameters $\bA_{\omega},\bb_{\omega}$.   $\Omega$ denotes the partition of the input space where $\bx$ resides. To practically compute the per-region affine mapping, we set $\bA_{\omega}$ to the Jacobian matrix of the network at the corresponding input $x$, and $b$ to be defined as $f(x)-\bA_{\omega}x$.
Therefore, the DNN mapping composed of affine transformations on each input space partition region $\omega \in \Omega$ based on the coordinate change induced by $\bA_{\omega}$ and the shift induced by $\bb_{\omega}$.

When the input space is associated with a density distribution, this density is transformed by the mapping $f$.
Calculating the density of $f(X)$ is generally intractable. However, under the disjoint support assumption from~\Cref{sec:manifold}, we can arbitrarily increase the density's representation power by raising the number of prototypes $N$.
As a result, each Gaussian's support is contained within the region $\omega$ where its means lie, leading to the following theorem:
\begin{thm}
\label{cor:mixture}
Given the setting of \Cref{eq:x_density}, the unconditional DNN output density, $Z$, can be approximated as a mixture of the affinely transformed distributions $\bx|\bx^*_{n(\bx)}$:\vspace*{-5pt}
% In the Gaussian case:
\begin{align*}
    Z
    % \hspace{-0.1cm}
    \sim
    % \hspace{-0.1cm}
\sum_{n=1}^{N}\mathcal{N}\hspace{-0.1cm}\left(\hspace{-0.1cm}\bA_{\omega(\bx^*_{n})}\bx^*_{n}+\bb_{\omega(\bx^*_{n})},\bA^T_{\omega(\bx^*_{n})}\Sigma_{\bx^*_{n}}\bA_{\omega(\bx^*_{n})}\hspace{-0.1cm}\right)^{1_{\{T=n\}}}\hspace*{-8pt},
\end{align*}
where $\omega(\bx^*_{n})=\omega \in \Omega \iff \bx^*_{n} \in \omega$ is the partition region in which the prototype $\bx^*_{n}$ lives in.
\end{thm}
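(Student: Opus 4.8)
The plan is to push the mixture through $f$ one component at a time and then recombine. First I would condition on the categorical variable $T$: by the data distribution hypothesis of \Cref{sec:manifold}, $X \mid \{T=n\}$ has, up to the $\eps$-mass discarded in passing to effective supports, the law $\mathcal{N}(\bx^*_n,\Sigma_{\bx^*_n})$. The second ingredient is the containment assumption: since $N$ may be taken large enough that the effective support of each component lies inside the single region $\omega(\bx^*_n)$ of the architecture-induced partition $\Omega$ that contains its mean, the event $\{X \in \omega(\bx^*_n)\}$ has conditional probability $1 - O(\eps)$ given $\{T=n\}$, and on that event $f$ restricts to the \emph{single} affine map $\bx \mapsto \bA_{\omega(\bx^*_n)}\bx + \bb_{\omega(\bx^*_n)}$ read off from \Cref{eq:CPA}. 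Hence, up to $O(\eps)$ in total variation, $Z\mid\{T=n\}$ is the image of a Gaussian under one fixed affine map.

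Next I would invoke the standard fact that an affine image $AX+b$ of $X\sim\mathcal{N}(\mu,\Sigma)$ is again Gaussian with mean $A\mu+b$ and covariance $A\Sigma A^{\top}$ — verified in one line via characteristic functions, $\mathbb{E}\big[e^{\mathrm{i}\,t^{\top}(AX+b)}\big]=\exp\!\big(\mathrm{i}\,t^{\top}(A\mu+b)-\tfrac12 t^{\top}A\Sigma A^{\top}t\big)$, or by change of variables when $A$ has full row rank. Applying this with $A=\bA_{\omega(\bx^*_n)}$, $b=\bb_{\omega(\bx^*_n)}$, $\mu=\bx^*_n$, $\Sigma=\Sigma_{\bx^*_n}$ produces exactly the $n$-th component in the statement (the covariance being the transformed $\bA_{\omega(\bx^*_n)}\Sigma_{\bx^*_n}\bA_{\omega(\bx^*_n)}^{\top}$). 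Because $K\ge D$, this Gaussian is supported on the affine subspace $\Aff(\omega(\bx^*_n);\bA_{\omega(\bx^*_n)},\bb_{\omega(\bx^*_n)})\subseteq\im(f)$ and is therefore degenerate in $\R^{K}$; I would point out that "$\mathcal{N}$" is to be read as this (possibly rank-deficient) Gaussian law on $\R^{K}$, and that it is in fact a Gaussian \emph{truncated} to $\omega(\bx^*_n)$, the truncation being negligible precisely because the effective support sits inside $\omega(\bx^*_n)$.

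Finally I would marginalize over $T\sim\mathrm{Cat}(N)$: writing $p_{Z}=\sum_{n=1}^{N}\Pr(T=n)\,p_{Z\mid T=n}$, using $\Pr(T=n)=1/N$, and substituting the componentwise identification above yields the displayed mixture, with the exponent–indicator notation $(\cdot)^{1_{\{T=n\}}}$ merely encoding "the $n$-th component is active when $T=n$." Summing the per-component $O(\eps)$ discrepancies shows that $\mathrm{Law}(Z)$ agrees with the stated mixture up to $O(N\eps)$ in total variation, which is the meaning of the "$\sim$"/"approximately."

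\noindent\textbf{Main obstacle.} The real work is not the affine-pushforward computation (which is routine) but making the two approximations simultaneously quantitative: choosing $N$ large enough that every component's effective support is at once disjoint from the others \emph{and} contained in one region of the data-dependent partition $\Omega$, and arguing these choices are compatible so the accumulated error stays controlled. A secondary subtlety is careful bookkeeping of the degeneracy of $\bA_{\omega(\bx^*_n)}\Sigma_{\bx^*_n}\bA_{\omega(\bx^*_n)}^{\top}$ when $K>D$, so that any subsequent "density" statements are read on the appropriate lower-dimensional affine subspaces rather than on $\R^{K}$.
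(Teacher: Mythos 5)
Your proposal is correct and follows essentially the same route as the paper's own argument: the effective support of each Gaussian component is contained in a single partition region, so $f$ acts as one fixed affine map on that support, and the pushforward of each component is the corresponding affinely transformed Gaussian, recombined over $T$. Your version is considerably more explicit than the paper's three-sentence proof (which omits the characteristic-function computation, the marginalization over $T$, the quantitative $O(N\eps)$ total-variation bookkeeping, and the rank-deficiency caveat for $K>D$), but the underlying idea is identical.
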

\vspace*{-5pt}
\begin{proof}
\vspace*{-5pt}
See \Cref{app:2}.
\end{proof}
In other words, Theorem \ref{cor:mixture}  implies that when the input  noise is small, we can simplify the conditional output density to a single Gaussian: $(Z^\prime|X^\prime=x_n) \sim \mathcal{N} \left(\mu(x_n), \Sigma(x_n) \right),$
where $\mu(x_n) =\bA_{\omega(\bx_{n})}\bx_{n}+\bb_{\omega(\bx_{n})}$ and  $\Sigma(x_n) = \bA^T_{\omega(\bx_{n})}\Sigma_{\bx_{n}}\bA_{\omega(\bx_{n})}$.

\section{Information Optimization and the VICReg Optimization Objective}
\label{sec:recovery}

Building on our earlier discussion, we used the Data Distribution Hypothesis to model the conditional output in deterministic networks as a Gaussian mixture. This allowed us to frame the SSL training objective as maximizing the mutual information, $I(Z;X^\prime)$ and $I(Z^\prime;X)$.

However, in general, this mutual information is intractable. Therefore, we will use our derivation for the network's representation to obtain a tractable variational approximation using the expected loss, which we can optimize.

The computation of expected loss requires us to marginalize the stochasticity in the output. We can employ maximum likelihood estimation with a Gaussian observation model. For computing the expected loss over $x^\prime$ samples, we must marginalize the stochasticity in $Z^\prime$. This procedure implies that the conditional decoder adheres to a Gaussian distribution: $\left(Z|X^\prime = x_{n}\right) \sim \mathcal{N}(\mu(x_{n}), I + \Sigma(x_{n}))$.

However, calculating the expected log loss over samples of $Z$ is challenging.  We thus focus on a lower bound -- the expected log loss over $Z^\prime$ samples. Utilizing Jensen's inequality, we derive the following lower bound:
\begin{align}
\begin{split}
    % &
    \EE_{x^\prime}\left[\log q(z|x^\prime)\right]
    % \\
    &
    \geq  
    \EE_{z^\prime|x^\prime}\left[\log q(z|z^\prime)\right]
    % \\
    % &
    = 
    \frac{1}{2}( d \log 2\pi - \left(z-\mu(x^\prime)\right)^2 - \text{Tr}\log\Sigma(x^\prime) ) .
    \label{eq:logzz}
\end{split}
\end{align}
Taking the expectation over $Z$, we get
\begin{align}
% \begin{split}
   \label{eq:logzze}
    &
    \EE_{z|x}\left[\EE_{z^\prime|x^\prime}\left[\log q(z|z^\prime)\right]\right]
    % \\
    % &
    =
    \frac{1}{2} ( d \log 2\pi -   \left(\mu(x) -\mu(x^\prime)\right)^2 -\log \left( |\Sigma(x)|  \cdot |\Sigma(x^\prime)|\right) ) .
    % \end{split}
    % \nonumber
 \end{align}
Combining all of the above, we obtain
\begin{align}
\label{eq:izz_bound}
    I(Z;X^\prime) &\geq 
    %H(Z) +\EE_{x, z|x, x^\prime, z^\prime|x^\prime}[\log q(z|z^\prime)]
    %\\
   % \begin{split}
   % &
   % =
    H(Z) + 
    \frac{d}{2}\log 2\pi - \frac12\EE_{x, x^\prime}[  (\mu(x) -\mu(x^\prime))^2
    % \\
    % &
    % \qquad
    + \log ( |\Sigma(x)| \cdot |\Sigma(x^\prime)|)] .
   % \end{split}
\end{align}
The full derivations are presented in \Cref{app:33}.
To optimize this objective in practice, we approximate $p(x,x^\prime)$ using the empirical data distribution
\begin{align}
\begin{split}
\label{eq:obj}
    &
    L(x_1\dots x_N, x^\prime_1\dots x^\prime_N)
    \approx
    \frac1N \sum_{i=1}^N \underbrace{ H(Z)- \log  \left( |\Sigma(x_i)| \cdot |\Sigma(x_i^\prime)|\right)}_{\text{Regularizer}}
    % \\
    % &
    % \qquad
    -\underbrace{\frac12\left(\mu(x_i) -\mu(x_i^\prime)\right)^2}_{\text{Invariance}} .
\end{split}
\end{align}

\begin{figure*}[t!]
    \centering
    \includegraphics[width=0.46\linewidth]{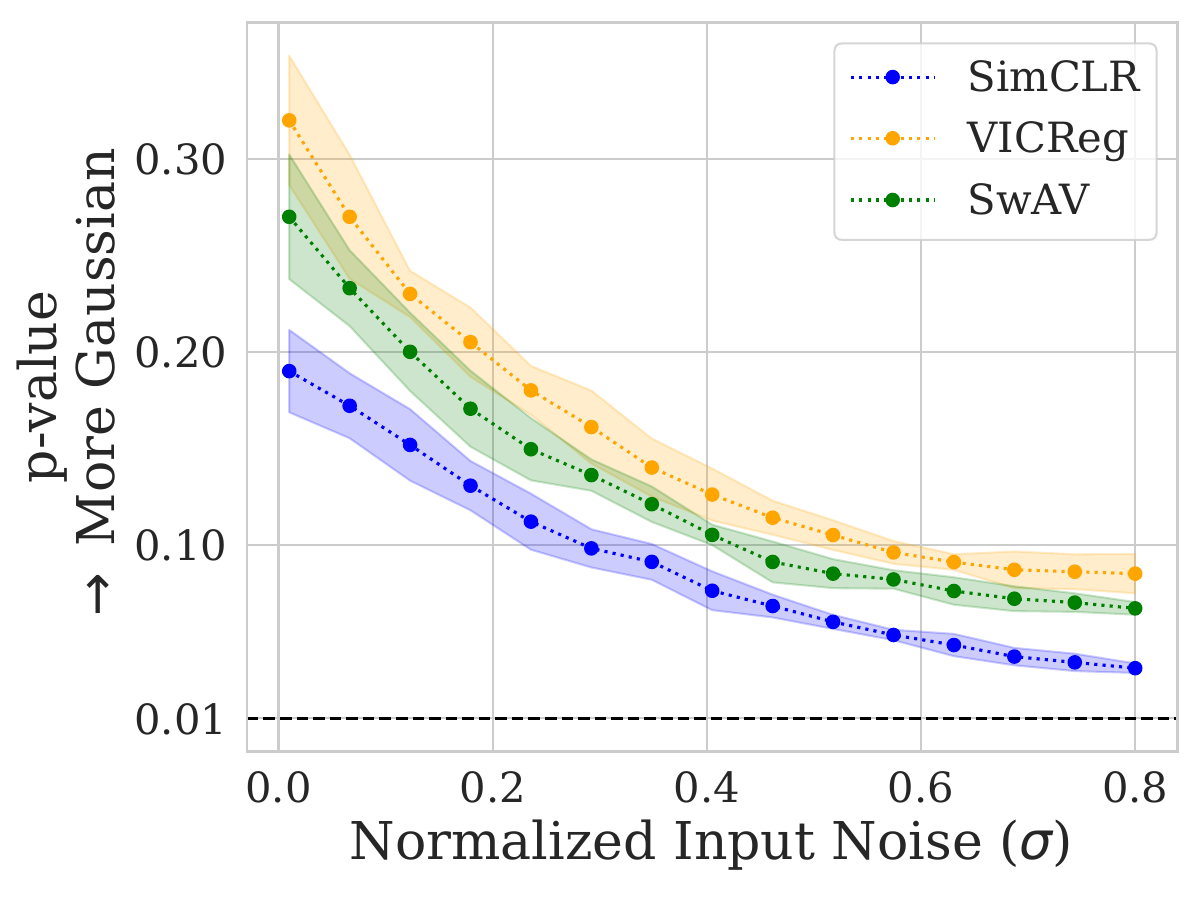}
    \hspace{0.5cm}    \includegraphics[width=0.47\linewidth]{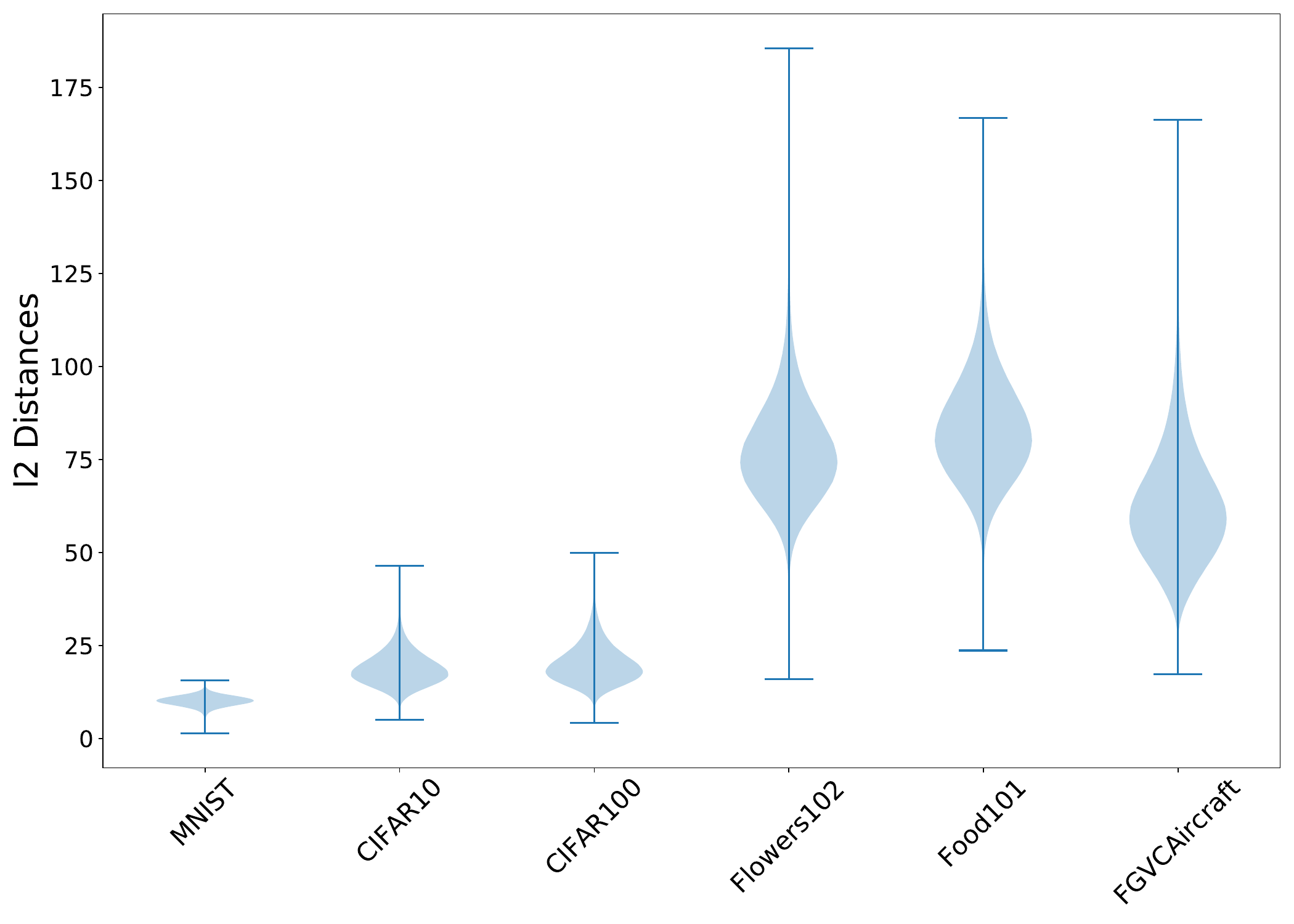} \hspace{1.5cm} \\
    %\vspace{-0.4cm}
    \caption{
    \textbf{Left:} \textbf{The network output for SSL training is more Gaussian for small input noise}.
    The $p$-value of the normality test for different SSL models trained on ImageNet for different input noise levels.
    The dashed line represents the point at which the null hypothesis (Gaussian distribution) can be rejected with $99\%$ confidence.
    \textbf{Right: The Gaussians around each point are not overlapping.} The plots show the $l2$ distances between raw images for different datasets. As can be seen, the distances are largest for more complex real-world datasets.
    }
    \label{fig:gaussian}
    % \vspace*{-10pt}
\end{figure*}

%\vspace{-0.1cm}

\pagebreak

\subsection{Variance-Invariance-Covariance Regularization: An Information-Theoretic Perspective}
\label{sec:vicreg}

Next, we connect the VICReg to our information-theoretic-based objective. The ``invariance term'' in~\Cref{eq:obj}, which pushes augmentations from the same image closer together, is the same term used in the VICReg objective. However, the computation of the regularization term poses a significant challenge. Entropy estimation is a well-established problem within information theory, with Gaussian mixture densities often used for representation. Yet, the differential entropy of Gaussian mixtures lacks a closed-form solution \citep{paninski2003estimation}.

A straightforward method for approximating entropy involves capturing the distribution's first two moments, which provides an upper bound on the entropy. However, minimizing an upper bound doesn't necessarily optimize the original objective. Despite reported success from minimizing an upper bound ~\citep{martinez2021permute, nowozin2016f}, this approach may induce instability during the training process.

Let $\Sigma_Z$ denote the covariance matrix of $Z$. We utilize the first two moments to approximate the entropy we aim to maximize. 
Because the invariance term appears in the same form as the original VICReg objective, we will look only at the regularizer. Consequently, we get the approximation
\begin{align}
\label{eq:ll}
\bar{L}(x_1\dots x_N, x^\prime_1\dots x^\prime_N)\approx
\sum_{i=1}^N {\log \frac{|\Sigma_{Z}(x_1\dots x_N)|}{ |\Sigma(x_i)|\cdot |\Sigma(x_i^\prime)|}} .
\end{align}

\begin{thm}
\label{th:maxvicre}{
Assuming that the eigenvalues of $\Sigma(x_i)$ and $\Sigma(x^\prime_i)$, along with the differences between the Gaussian means $\mu(x_i)$ and $\mu(x_i^\prime)$, are bounded, the solution to the maximization problem
\begin{align}
    \max_{\Sigma_{Z}} \left\{ 
    \sum_{i=1}^N {\log \frac{|\Sigma_{Z}(x_1\dots x_N)|}{ |\Sigma(x_i)|\cdot |\Sigma(x_i^\prime)|}} \right\}
\end{align}
involves setting $\Sigma_Z$ to a diagonal matrix.}
\end{thm}
% \vspace{-0.6cm}
\begin{proof}
\vspace*{-8pt}
See \Cref{app:info_vicgreg}.
\end{proof}
\vspace*{-5pt}
% \vspace{-0.6cm}
According to Theorem \ref{th:maxvicre}, we can maximize \Cref{eq:ll} by diagonalizing the covariance matrix and increasing its diagonal elements. This goal can be achieved by minimizing the off-diagonal elements of $\Sigma_Z$--the covariance criterion of VICReg--and by maximizing the sum of the log of its diagonal elements. While this approach is straightforward and efficient, it does have a drawback: the diagonal values could tend towards zero, potentially causing instability during logarithm computations. A solution to this issue is to use an upper bound and directly compute the sum of the diagonal elements, resulting in the variance term of VICReg. This establishes the link between our information-theoretic objective and the three key components of VICReg.

\subsection{Empirical Validation of Assumptions About Data Distributions}
\label{sec:ass_val}
Validating our theory, we tested if the conditional output density $P(Z|X)$ becomes a Gaussian as input noise lessens. We used a ResNet-50 model trained with SimCLR or VICReg objectives on CIFAR-10, CIFAR-100, and ImageNet datasets.  We sampled $512$ Gaussian samples for each image from the test dataset, examining whether each sample remained Gaussian in the DNN's penultimate layer. We applied the D'Agostino and Pearson's test to ascertain the validity of this assumption~\citep{10.2307/2334522}.

\Cref{fig:gaussian} (left) displays the $p$-value as a function of the normalized std. For low noise levels, we reject that the network's conditional output density is non-Gaussian with an $85\%$ probability when using VICReg. However, the network output deviates from Gaussian as the input noise increases.

Next, we verified our assumption of non-overlapping effective support in the model's data distribution. We calculate the distribution of pairwise $l_2$ distances between images across seven datasets: MNIST \citep{lecun1998gradient}, CIFAR10, CIFAR100 \citep{krizhevsky2009learning}, Flowers102 \citep{nilsback2008automated}, Food101 \citep{bossard2014food}, and FGVAircaft \citep{maji2013fine}. \Cref{fig:gaussian} (right) reveals that the pairwise distances are far from zero, even for raw pixels. This implies that we can use a small Gaussian around each point without overlap, validating our assumption as realistic.

\vspace*{20pt}
\pagebreak

\section{Self-Supervised Learning Models through Information Maximization}
\label{sec:optimizemi}

The practical application of \Cref{eq:izz_bound} involves several key ``design choices''.
We begin by comparing how existing SSL models have implemented it, investigating the estimators used, and discussing the implications of their assumptions. Subsequently, we introduce new methods for SSL that incorporate sophisticated estimators from the field of information theory, which outperform current approaches.

\subsection{VICReg vs. SimCLR}
\label{sec:othermethods}
%\vspace{-0.2cm}

In order to evaluate their underlying assumptions and strategies for information maximization, we compare VICReg to contrastive SSL methods such as SimCLR along with non-contrastive methods like BYOL and SimSiam.

\vspace*{-3pt}
\paragraph{Contrastive Learning with SimCLR.}
In their work, \citet{lee2021compressive} drew a connection between the SimCLR objective and the variational bound on information regarding representations by employing the von Mises-Fisher distribution.
By applying our analysis for information in deterministic networks with their work, we compare the main differences between SimCLR and VICReg:

\begin{wrapfigure}[20]{r}{0.5\textwidth}
    \centering
    \vspace{-6mm}
         \includegraphics[width=0.53\textwidth]{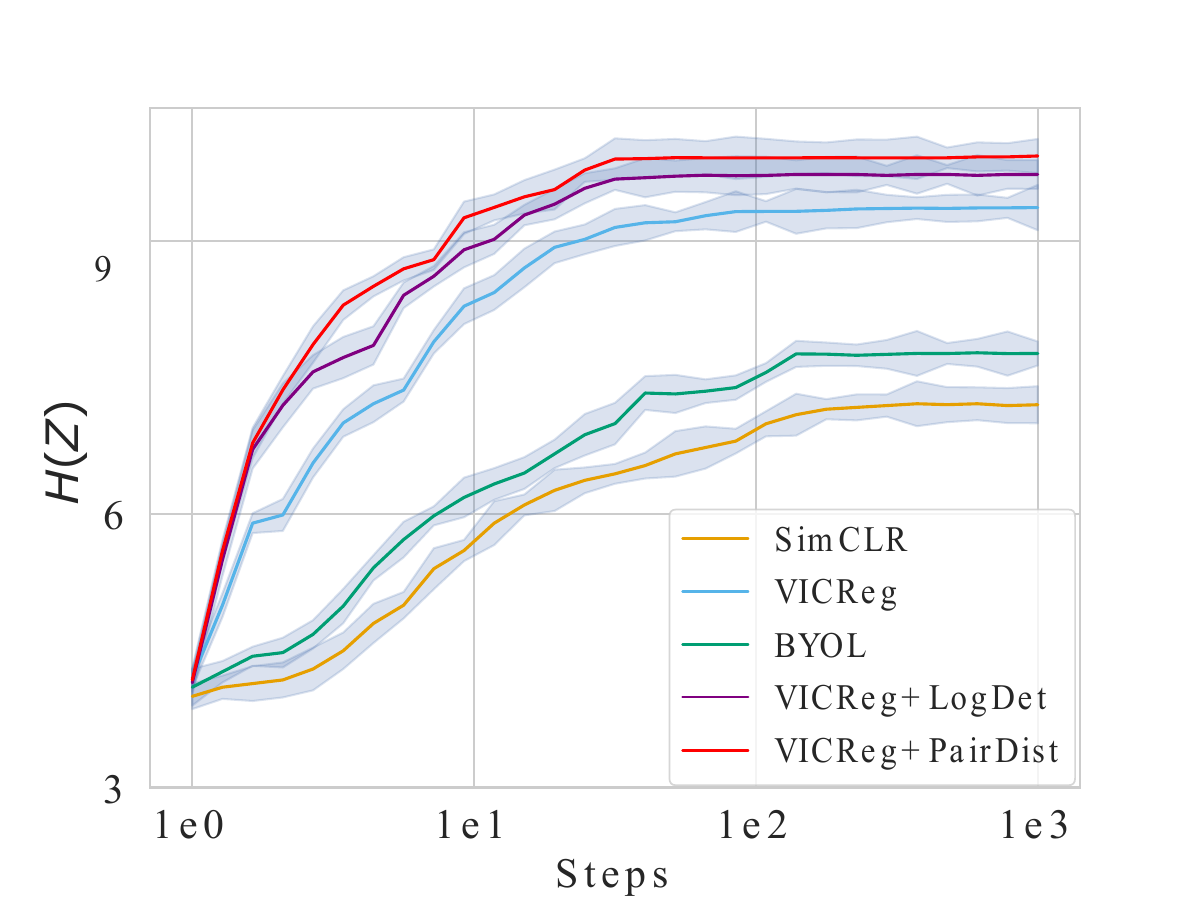}
             %\vspace{-5mm}
    \caption{\textbf{VICReg has higher Entropy during training.} The entropy along the training for different SSL methods.
            Experiments were conducted with ResNet-18 on CIFAR-10. Error bars represent one standard error over 5 trials. }
                 %\vspace{-3mm}
        \label{fig:entropy}
\end{wrapfigure}
(i) \textbf{Conditional distribution:} SimCLR assumes a von Mises-Fisher distribution for the encoder, while VICReg assumes a Gaussian distribution.
(ii) \textbf{Entropy estimation:} SimCLR approximated it based on the finite sum of the input samples.
In contrast, VICReg estimates the entropy of $Z$ solely based on the second moment.
Developing SSL methods that integrate these two distinctions form an intriguing direction for future research.
%\vspace{-0.1cm}

\vspace*{-3pt}
\paragraph{Empirical comparison.}
We trained ResNet-18 on CIFAR-10 for VICReg, SimCLR, and BYOL and compared their entropies directly using the \textit{pairwise distances} entropy estimator. (For more details, see  \Cref{app:training_detalies}.) This estimator was not directly optimized by any method and was an independent validation. The results (\Cref{fig:entropy}), showed that entropy increased for all methods during training, with SimCLR having the lowest and VICReg the highest entropy.

\subsection{Family of alternative Entropy Estimators}
\label{sec:estimator}

Next, we suggest integrating the invariance term of current SSL methods with plug-in methods that optimize entropy.

\begin{table}[t!]
\centering
\caption{\textbf{The proposed entropy estimators outperform previous methods.} CIFAR-10, CIFAR-100, and Tiny-ImageNet  top-1 accuracy under linear evaluation using ResNet-18, ConvNetX and VIT as backbones. Error bars correspond to one standard error over three trials. }
\footnotesize % Change font size
\setlength{\tabcolsep}{4pt} % Reduce space between columns
\resizebox{\textwidth}{!}{%
\begin{tabular}{l|c|c|c|c|c}
\toprule
\textbf{Method} & \textbf{CIFAR-10} & \multicolumn{2}{c|}{\textbf{Tiny-ImageNet} } & \multicolumn{2}{c}{\textbf{CIFAR-100} } \\
                & \textbf{ResNet-18}          & \textbf{ConvNetX} & \textbf{VIT} & \textbf{ConvNetX} & \textbf{VIT} \\ \midrule
SimCLR          & \(89.72 \pm 0.05\) & \(50.86 \pm 0.13\) & \(51.16 \pm 0.13\) & \(67.21 \pm 0.24\) & \(67.31 \pm 0.18\) \\ \midrule
Barlow Twins    & \(88.81 \pm 0.10\) & \(51.34 \pm 0.10\) & \(51.40 \pm 0.16\) & \(68.54 \pm 0.15\) & \(68.02 \pm 0.12\) \\ \midrule
SwAV            & \(89.12 \pm 0.13\) & \(50.76 \pm 0.14\) & \(51.54 \pm 0.20\) & \(68.93 \pm 0.14\) & \(67.89 \pm 0.21\) \\ \midrule
MoCo            & \(89.46 \pm 0.08\) & \(52.36 \pm 0.21\) & \(53.06 \pm 0.21\) & \(70.32 \pm 0.15\) & \(69.89 \pm 0.14\) \\ \midrule
VICReg          & \(89.32 \pm 0.09\) & \(51.02 \pm 0.26\) & \(52.12 \pm 0.25\) & \(70.09 \pm 0.20\) & \(70.12 \pm 0.17\) \\ \midrule
BYOL            & \(89.21 \pm 0.11\) & \(52.24 \pm 0.17\) & \(53.44 \pm 0.20\) & \(70.01 \pm 0.27\) & \(69.59 \pm 0.22\) \\ \midrule  \midrule
VICReg + PairDist (\textbf{ours}) & \textbf{\boldmath\(90.37 \pm 0.09\)} & \(52.61 \pm 0.15\) & \(53.70 \pm 0.13\) & \(71.10 \pm 0.16\) & \(70.50 \pm 0.19\) \\ \midrule 
VICReg + LogDet (\textbf{ours})   & \(90.27 \pm 0.08\) & \(52.91 \pm 0.17\) & \textbf{\boldmath\(54.89 \pm 0.20\)} & \(71.23 \pm 0.18\) & \(70.61 \pm 0.17\) \\ \midrule
BYOL + PairDist (\textbf{ours})    & \(90.19 \pm 0.14\) & \textbf{\boldmath\(53.47 \pm 0.22\)} & \(54.33 \pm 0.21\) &  \textbf{\boldmath\(71.39 \pm 0.25\)} & \textbf{\boldmath\(71.09 \pm 0.24\)} \\ \midrule
BYOL + LogDet (\textbf{ours})      & \(90.11 \pm 0.16\) & \(53.19 \pm 0.25\) & \(54.67 \pm 0.27\) &  \(71.20 \pm 0.21\) & \(70.79 \pm 0.26\) \\
\bottomrule
\end{tabular}
}
\label{tab:results}
\end{table}

\iffalse
\begin{figure}[t]

    \centering
    %\captionsetup{skip=10pt}
    \begin{minipage}{0.47\textwidth}
        \centering
         \vspace{-1.1cm} 
        \includegraphics[width=1\linewidth]{figures/entropy (8).pdf}
        \caption{
            \textbf{VICReg has higher Entropy during training.} The entropy along the training for different SSL methods.
            Experiments were conducted with ResNet-18 on CIFAR-10. Error bars correspond to one standard error over $3$ trials.     }
        \label{fig:entropy}
                \vfill 
    \end{minipage}%
        \hspace{1cm} % Add vertical space here
\begin{minipage}{0.45\textwidth}
    \centering
    \small
    \begin{tabularx}{\linewidth}{@{\hspace{0.2cm}}Xc}
    \toprule
    \textbf{Method} & \textbf{Accuracy} (\%) \\ 
    \midrule
    SimCLR & $89.72 \pm 0.05$ \\
    \midrule
    Barlow Twins & $88.81 \pm 0.10$ \\
    \midrule
    VICReg & $89.32  \pm 0.09$ \\
    \midrule
    \mbox{VICReg + LogDet (\textbf{ours})} & \textbf{$90.36 \pm 0.08$} \\
    \midrule
    \mbox{VICReg + PairDist}  (\textbf{ours}) & $\mathbf{90.29 \pm 0.09}$ \\
    \bottomrule
    \end{tabularx}
    \vspace{1cm} % Add vertical space here
    \captionof{table}{\textbf{Entropy estimators outperform previous SSL methods.} Test accuracy for linear evaluation of SSL for different entropy estimators, Experiments were conducted with ResNet-18 on CIFAR-10. Error bars correspond to one standard error over $3$ trials.}
    \label{tab:results}
\end{minipage}
\end{figure}

\fi

% \vspace*{-5pt}
%\vspace{-% \vspace*{-5pt}
%\vspace{-5pt}
\paragraph{Entropy estimators.}
The VICReg objective seeks to approximate the log determinant of the empirical covariance matrix through its diagonal terms. As discussed in \Cref{sec:vicreg}, this approach has its drawbacks. An alternative is to employ different entropy estimators. The LogDet Entropy Estimator \citep{zhouyin2021understanding} is one such option, offering a tighter upper bound. This estimator employs the differential entropy of $\alpha$ order with scaled noise and has been previously shown to be a tight estimator for high-dimensional features, proving robust to random noise. However, since this estimator provides an upper bound on entropy, maximizing this bound doesn't guarantee optimization of the original objective. To counteract this, we also introduce a lower bound estimator based on the \textit{pairwise distances} of individual mixture components~\cite{kolchinsky2017estimating}. In this family, a function determining pairwise distances between component densities is designated for each member. These estimators are computationally efficient and typically straightforward to optimize. For additional entropy estimators, see \Cref{app:estimators}. Beyond VICReg, these methods can serve as plug-in estimators for numerous SSL algorithms. Apart from VICReg, we also conducted experiments integrating these estimators with the BYOL algorithm.

\vspace*{-3pt}
\paragraph{Setup.} 
Experiments were conducted on three image datasets: CIFAR-10, CIFAR-100 \cite{krizhevsky_learning_2009}, and Tiny-ImageNet \cite{deng2009imagenet}. For CIFAR-10, ResNet-18 \cite{resnet-he} was used. In contrast, both ConvNeXt \citep{liu2022convnet} and Vision Transformer \cite{dosovitskiy2020image} were used for CIFAR-100 and Tiny-ImageNet. For comparison, we examined the following SSL methods: VICReg, SimCLR, BYOL, SwAV \cite{caron2020unsupervised}, Barlow Twins \cite{zbontar2021barlow}, and MoCo \cite{he2020momentum}. The quality of representation was assessed through linear evaluation.
A detailed description of different methods can be found in \Cref{app:ver}. 

%\vspace{-5pt}
% \vspace*{-5pt}
\paragraph{Results.} 
As evidenced by \Cref{tab:results}, the proposed entropy estimators surpass the original SSL methods. Using a more precise entropy estimator enhances the performance of both VICReg and BYOL, compared to their initial implementations. Notably, the pairwise distance estimator, being a lower bound, achieves superior results, resonating with the theoretical preference for maximizing a true entropy's lower bound. Our findings suggest that the astute choice of entropy estimators, guided by our framework, paves the way for enhanced performance.

\section{A Generalization Bound for Downstream Tasks}
\label{sec:gen}
In earlier sections, we linked information theory principles with the VICReg objective. Now, we aim to extend this link to downstream generalization via a generalization bound. This connection further aligns VICReg's generalization with information maximization and implicit regularization.
\vspace*{-1pt}
\paragraph{Notation.}
Consider input points $x$, outputs $y \in \RR^r$, labeled training data  $S=((x_i, y_i))_{i=1}^n$ of size $n$ and unlabeled training data $\bS=((\xp_i, \xpp_i))_{i=1}^m$ of size $m$, where $\xp_i$ and $\xpp_i$ share the same (unknown) label.
With the  unlabeled training data, we  define the  invariance loss
\vspace*{4pt}\begin{align*}
\SwapAboveDisplaySkip
 I_{\bS}(f_\theta)
    =
    \frac{1}{m}\sum_{i=1}^{m} \|f_\theta(\xp_{i})-f_\theta(\xpp_{i})\| ,
\end{align*}\\[-1pt]
where   $f_\theta$  is the trained representation on the unlabeled data $\bS$.  
We define a labeled loss $\ell_{x,y}(w)=\|W f_\theta(x)-y\|$  where  $w=\vect[W]\in \RR^{dr}$  is the vectorization of the matrix $W \in \RR^{r \times d}$.
Let $w_S=\vect[W_{S}]$  be the minimum norm solution as $W_S =\mini_{W'} \|W'\|_{F}$ such that
\vspace*{4pt}\begin{align*}
\SwapAboveDisplaySkip
    W'\in \argmin_{W} \frac{1}{n} \sum_{i=1}^n \|W_{} f_\theta(x_i)-y_i\|^2.
\end{align*}\\[-4pt]
We  also define the representation matrices
\vspace*{4pt}\begin{align*}
% \SwapAboveDisplaySkip
    \tZ
    % &
    =
    [f(x_1),\dots, f(x_{n})] \in \RR^{d\times n}
    % \\
    \quad \text{and} \quad
    \hZ
    % &
    =
    [f(\xp_1),\dots, f(\xp_{m})]\in \RR^{d\times m} ,
\end{align*}\\[-4pt]
and the projection matrices
\vspace*{4pt}\begin{align*}
% \SwapAboveDisplaySkip
    \Pb_\tZ
    % &
    =
    I -\tZ \T (\tZ \tZ\T)^\dagger \tZ
    % \\
    \quad \text{and} \quad
    \Pb_\hZ
    % &
    =
    I -\hZ\T (\hZ\hZ\T)^\dagger \hZ .
\end{align*}\\[-4pt]
We define the label matrix $Y_{S}=[y_{1},\dots, y_{n}]\T \in \RR^{n\times r}$ and  the unknown label matrix $Y_{\bS}=[\yp_1,\dots, \yp_m]\T \in \RR^{m\times r}$,  where $\yp_i$ is the unknown  label of $\xp_i$.
Let $\Fcal$ be a hypothesis space of $f_{\theta}$.
For a given hypothesis space $\Fcal$, we define the normalized Rademacher complexity
\begin{align*}
    \tilde \Rcal_{m}(\Fcal)
    =
    \frac{1}{\sqrt{m}}\EE_{\bS,\xi} \left[\sup_{f\in\Fcal} \sum_{i=1}^m \xi_i \|f(\xp_{i})-f(\xpp_{i})\|\right] ,
\end{align*}
where $\xi_1,\dots,\xi_m$ are independent uniform random variables  in $\{-1,1\}$.
It is normalized such that $\tilde \Rcal_{m}(\Fcal)=O(1)$  as $m\rightarrow \infty$.

\vspace*{-1pt}
\subsection{A \hspace*{-1pt}Generalization \hspace*{-1pt}Bound \hspace*{-1pt}for \hspace*{-1pt}Variance-Invariance-Covariance \hspace*{-1pt}Regularization}

Now we will show that the VICReg objective improves generalization on supervised downstream tasks.
More specifically, minimizing the unlabeled invariance loss while controlling the covariance $\hZ \hZ \T$ and the complexity of representations $\tilde \Rcal_{m}(\Fcal)$ minimizes the expected \textit{labeled loss}:
 \vspace{5pt}
\begin{thm}
\label{thm:1}{ (Informal version). For any $\delta>0$, with probability at least $1-\delta$, }
\vspace*{-1pt}\begin{align}
    \begin{split}
    &
    \EE_{x,y}[\ell_{x,y}(w_{S})]
    % \\
    % &
    \le
    I_{\bS}(f_\theta)  +\frac{2}{\sqrt{m}}\|\Pb_\hZ Y_\bS\|_{F}+\frac{1}{\sqrt{n}}  \|\Pb_\tZ Y_{S}\|_F
    % \\
    % &
    % \qquad
    +
    \frac{2\tilde \Rcal_{m}(\Fcal)}{\sqrt{m}}+\Qcal_{m,n} ,
    \end{split}
\end{align}\\[-1pt]
where
$\Qcal_{m,n} =O(G\sqrt{\ln (1/\delta) / m }+ \sqrt{ \ln (1/\delta) / n})\rightarrow 0$ as \mbox{$m,n\rightarrow \infty$}. In $\Qcal_{m,n}$, the value of $G$ for the term decaying at the rate $1/\sqrt{m}$ depends on the hypothesis space of $f_\theta$ and $w$ whereas the term decaying at the rate $1/\sqrt{n}$ is independent of any hypothesis space.
\end{thm}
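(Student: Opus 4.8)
The plan is to bound the population labeled loss $\EE_{x,y}[\ell_{x,y}(w_S)]$ through a chain of inequalities that (i) uses the closed form of the minimum-norm least-squares head to turn ``fitting residual'' quantities into Frobenius norms of projected label matrices, (ii) inserts the invariance loss $I_{\bS}(f_\theta)$ via the triangle inequality to bridge the two augmented views, and (iii) absorbs the remaining sampling error into the normalized Rademacher complexity plus lower-order bounded-differences terms collected in $\Qcal_{m,n}$.

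First I would record the algebra of $w_S$. The normal equations give $W_S = Y_S\T\tZ\T(\tZ\tZ\T)^{\dagger}$, hence $W_S\tZ = Y_S\T(I-\Pb_\tZ)$, so the matrix of labeled residuals on $S$ has Frobenius norm exactly $\|\Pb_\tZ Y_S\|_F$; by Cauchy--Schwarz the empirical labeled loss $\frac{1}{n}\sum_i\|W_S f_\theta(x_i)-y_i\|$ is at most $\frac{1}{\sqrt n}\|\Pb_\tZ Y_S\|_F$. This is exactly the $\frac{1}{\sqrt n}\|\Pb_\tZ Y_S\|_F$ term, and it involves no covering number, consistent with the claim that the $1/\sqrt n$ contribution to $\Qcal_{m,n}$ is hypothesis-space free. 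Running the same computation on the unlabeled sample, with the (unknown but fixed) best linear fit of $Y_{\bS}$ from $\hZ$, produces the $\frac{1}{\sqrt m}\|\Pb_\hZ Y_{\bS}\|_F$ term.

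Next I would connect the population labeled loss to these quantities. Because $f_\theta$ is trained only on $\bS$, it is independent of $S$, so conditionally on $f_\theta$ a McDiarmid/bounded-differences estimate controls $\EE_{x,y}[\ell_{x,y}(w_S)]-\frac1n\sum_i\ell_{x_i,y_i}(w_S)$ at rate $\sqrt{\ln(1/\delta)/n}$, the bounded-difference constant being finite since representations and the selected head are bounded on $\Fcal$. For the cross-view passage I would apply, pairwise, $\|W f_\theta(\xp_i)-\yp_i\|\le \|W f_\theta(\xpp_i)-\yp_i\|+\|W f_\theta(\xp_i)-W f_\theta(\xpp_i)\|$; averaging over $\bS$ surfaces $I_{\bS}(f_\theta)$ (the operator norm of the head is bounded on $\Fcal$ and is either normalized or absorbed into lower-order constants), and the deviation of the empirical invariance functional from its population counterpart, uniformly over $f\in\Fcal$, is precisely what $\tilde\Rcal_m(\Fcal)$ bounds after a standard symmetrization, yielding $\frac{2\tilde\Rcal_m(\Fcal)}{\sqrt m}$ and the $G\sqrt{\ln(1/\delta)/m}$ piece of $\Qcal_{m,n}$. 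Assembling the pieces with a union bound over the two high-probability events, and folding the remaining $O(\cdot/\sqrt m)$ and $O(\cdot/\sqrt n)$ remainders into $\Qcal_{m,n}$, gives the stated inequality; the factors of $2$ come from symmetrizing between the two views.

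The step I expect to be the main obstacle is the cross-view bridge: one must pin down how the downstream test point $(x,y)$ is coupled to an augmented pair (e.g. treating a single view as the test input) so that the triangle-inequality split is both valid and non-vacuous -- a naive symmetric coupling collapses to the trivial $0\le\EE[\text{invariance}]$ -- while simultaneously keeping the bound on the head's operator norm (which scales the invariance distance) dependent only on $\Fcal$ and the head's norm budget, so it can be hidden in $G$ rather than growing with $n$ or $m$. A minor but necessary technical care is making the two least-squares residual identities robust to rank deficiency of $\tZ\tZ\T$ and $\hZ\hZ\T$, which is exactly why the definitions of $\Pb_\tZ$ and $\Pb_\hZ$ use the Moore--Penrose pseudoinverse.
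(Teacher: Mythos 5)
Your decomposition gets the two ``algebraic'' main terms right: the closed form $W_S = Y_S^\top \tZ^\top(\tZ\tZ^\top)^\dagger$ does give $L_S(w_S)\le \frac{1}{\sqrt n}\|\Pb_\tZ Y_S\|_F$ via Jensen, the analogous computation with the minimum-norm fit of $g^*$ on the unlabeled sample gives the $\frac{1}{\sqrt m}\|\Pb_\hZ Y_\bS\|_F$ term, and the transfer of the invariance functional to $I_{\bS}(f_\theta)$ via symmetrization and $\tilde \Rcal_m(\Fcal)$ is also the paper's route. But the concentration step you lean on is the one that fails: you cannot control $\EE_{x,y}[\ell_{x,y}(w_S)] - \frac1n\sum_i \ell_{x_i,y_i}(w_S)$ ``conditionally on $f_\theta$'' by McDiarmid at rate $\sqrt{\ln(1/\delta)/n}$, because $w_S$ is itself a function of the labeled sample $S$. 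Boundedness gives the bounded-differences property, but the expectation of that gap over $S$ is the expected generalization error of an $S$-dependent predictor and need not be small; McDiarmid only concentrates the quantity around its (possibly large) mean. The standard escape is to take a supremum over a head class $\Wcal$, which reintroduces a Rademacher complexity in $n$ --- precisely what the theorem's claim of a hypothesis-space-free $1/\sqrt n$ term is meant to avoid.

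The paper's missing ingredient, which your proposal flags as ``the main obstacle'' but does not resolve, is a ghost-sample/conditional-independence device: for each labeled $(x_i,y_i)$ draw a fresh $\bbx_i\sim\Dcal_{y_i}$ with the same label, move the loss onto the fresh points by triangle inequality at the cost of $\frac1n\sum_i\|(W_S-W_\bS)(f_\theta(\bbx_i)-f_\theta(x_i))\|$ plus the approximation residuals $\varphi(x_i)=g^*(x_i)-W_\bS f_\theta(x_i)$, and then observe that, conditioned on each label $y$, the fresh $\bbx_i$ with $y_i=y$ are i.i.d.\ and independent of $W_S$, so plain Hoeffding applies class by class (with a union bound over $\Ycal$ and a separate term for $\Pr(Y=y)-|\Ical_y|/n$). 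This is where the hypothesis-space-free $O(\sqrt{\ln(1/\delta)/n})$ piece of $\Qcal_{m,n}$ actually comes from --- not from the $\|\Pb_\tZ Y_S\|_F$ computation, which is a separate main term --- and it is also where the invariance loss enters: the cross-view bridge is between $x_i$ and its same-label ghost $\bbx_i$, whose expected displacement under $f_\theta$ is then related to the empirical $I_{\bS}(f_\theta)$ on the unlabeled pairs through Hoeffding and $\tilde \Rcal_m(\Fcal)$, with the head difference $\|W_S-W_\bS\|_2$ appearing as the factor $c$ in the complete version of the theorem. Your proposed bridge directly on the unlabeled pairs $(\xp_i,\xpp_i)$ never touches the downstream test distribution and, as you yourself suspected, does not close.
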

\vspace{-5pt}
\begin{proof}
The complete version of Theorem \ref{thm:1} and its proof are presented in \Cref{app:1}.
\end{proof}
The term $\|\Pb_\hZ Y_\bS\|_{F}$ in Theorem \ref{thm:1} contains the unobservable label matrix $Y_\bS$.  However, we can minimize this term by using $\|\Pb_\hZ Y_\bS\|_{F} \le\|\Pb_\hZ\|_F \|Y_\bS\|_{F} $ and by minimizing $\|\Pb_\hZ\|_F$. The factor $\|\Pb_\hZ\|_F$ is minimized when the rank of the covariance  $\hZ \hZ \T$  is maximized. This can be enforced by maximizing the diagonal entries while minimizing the off-diagonal entries, as is done in VICReg. 

The term $\|\Pb_\tZ Y_{S}\|_{F}$ contains only observable variables, and we can directly measure the value of this term using training data. In addition, the term  $\|\Pb_\tZ Y_{S}\|_{F}$  is also minimized when the rank of the covariance  $\tZ \tZ \T$  is maximized. Since the covariances  $\tZ \tZ \T$  and   $\hZ \hZ \T$ concentrate to each other via concentration inequalities with the error in the order of $O(\sqrt{(\ln (1/\delta))/n}+\tilde\Rcal_{m}(\Fcal) \sqrt{(\ln (1/\delta))/m})$, we can also minimize the upper bound on  $\|\Pb_\tZ Y_{S}\|_{F}$  by maximizing the diagonal entries of $\hZ \hZ \T$ while minimizing its off-diagonal entries, as is done in VICReg. 

Thus, VICReg can be understood as a method to minimize the generalization bound in Theorem \ref{thm:1} by minimizing the invariance loss while controlling the covariance to minimize the \textit{label-agnostic} upper bounds on $\|\Pb_\hZ Y_\bS\|_{F}$ and  $\|\Pb_\tZ Y_{S}\|_{F}$. If we know  \textit{partial} information about the label $Y_\bS$ of the unlabeled data, we can use it to minimize $\|\Pb_\hZ Y_\bS\|_{F}$ and $\|\Pb_\tZ Y_{S}\|_F$ directly.

\subsection{Comparison of Generalization Bounds}

The SimCLR generalization bound~\citep{saunshi2019theoretical} requires the number of labeled classes to go infinity to close the generalization gap, whereas the VICReg bound in Theorem \ref{thm:1} does \textit{not} require the number of label classes to approach infinity for the generalization gap to go to zero.
This reflects that, unlike SimCLR, VICReg does not use negative pairs and thus does not use a loss function based on the implicit expectation that the labels of a negative pair are different.
Another difference is that our VICReg bound improves as $n$ increases, while the previous bound of SimCLR~\citep{saunshi2019theoretical} does not depend on $n$. This is because \citet{saunshi2019theoretical} assumes partial access to the true distribution per class for setting, which removes the importance of labeled data size $n$ and is not assumed in our study.

Consequently, the generalization bound in Theorem~\ref{thm:1} provides a new insight for VICReg regarding the ratio of the effects of $m$ v.s. $n$ through $G\sqrt{\ln (1/\delta)/m}+ \sqrt{\ln (1/\delta)/n}$.
Finally, Theorem \ref{thm:1} also illuminates the advantages of VICReg over standard supervised training. That is, with standard training, the generalization bound via the Rademacher complexity requires the complexities of hypothesis spaces,   $\tilde \Rcal_{n}(\Wcal)/\sqrt{n}$ and $\tilde \Rcal_{n}(\Fcal)/\sqrt{n}$, with respect to the size of labeled data $n$, instead of the size of unlabeled data $m$. Thus, Theorem \ref{thm:1} shows that using SSL, we can replace the complexities of hypothesis spaces in terms of $n$ with those in terms of $m$. Since the number of unlabeled data points is typically much larger than the number of labeled data points, this illuminates the benefit of SSL. Our bound is different from the recent information bottleneck bound \citep{icml2023kzxinfodl} in that both our proof and bound do not rely on information bottleneck.

% \clearpage

\subsection{Understanding Theorem 2 via Mutual Information Maximization}

Theorem \ref{thm:1}, together with the result of the previous section, shows that, for generalization in the downstream task, it is helpful to maximize the mutual information $I(Z; X')$ in SSL via minimizing the invariance loss $I_{\bS}(f_\theta)$  while controlling the covariance $\hZ \hZ \T$.
The term $2\tilde \Rcal_{m}(\Fcal) / \sqrt{m}$ captures the importance of controlling the complexity of the representations $f_\theta$.
To understand this term further, let us consider a discretization of the parameter space of $\Fcal$ to have finite $|\Fcal| < \infty$.
Then, by Massart's Finite Class Lemma, we have that $\tilde \Rcal_{m}(\Fcal) \le C\sqrt{\ln |\Fcal| } $ for some constant $C>0$. Moreover, \citet{shwartz2022information} shows that we can approximate  $\ln |\Fcal|$ by  $2^{I(Z; X)}$.
Thus, in Theorem \ref{thm:1}, the term $I_{\bS}(f_\theta)  +\frac{2}{\sqrt{m}}\|\Pb_\hZ Y_\bS\|_{F}+\frac{1}{\sqrt{n}}  \|\Pb_\tZ Y_{S}\|_F$ corresponds to  $I(Z;X')$ which we want to maximize while compressing the term of $2\tilde \Rcal_{m}(\Fcal) / \sqrt{m}$ which corresponds to  $I(Z; X)$ ~\citep{federici2019learning, shwartz2017compression,shwartz2022we}.

Although we can explicitly add regularization on the information to control $2\tilde \Rcal_{m}(\Fcal) / \sqrt{m}$, it is possible that $I(Z;X|X')$ and $2\tilde \Rcal_{m}(\Fcal) / \sqrt{m}$ are implicitly regularized via implicit bias through design choises~\citep{gunasekar2017implicit,soudry2018implicit,gunasekar2018implicit}.
Thus, Theorem \ref{thm:1} connects the information-theoretic understanding of VICReg with the probabilistic guarantee on downstream generalization.

\section{Limitations}
\label{app:limitations}
In our paper, we proposed novel methods for SSL premised on information maximization. Although our methods demonstrated superior performance on some datasets, computational constraints precluded us from testing them on larger datasets. Furthermore, our study hinges on certain assumptions that, despite rigorous validation efforts, may not hold universally. While we strive for meticulous testing and validation, it's crucial to note that some assumptions might not be applicable in all scenarios or conditions. These limitations should be taken into account when interpreting our study's results.

\section{Conclusions}
\label{sec:conclusion}

We analyzed the Variance-Invariance-Covariance Regularization for self-supervised learning through an information-theoretic lens.
By transferring the stochasticity required for an information-theoretic analysis to the input distribution, we showed how the VICReg objective can be derived from information-theoretic principles, used this perspective to highlight assumptions implicit in the VICReg objective, derived a VICReg generalization bound for downstream tasks, and related it to information maximization.

Building on these findings, we introduced a new VICReg-inspired SSL objective.
Our probabilistic guarantee suggests that VICReg can be further improved for the settings of partial label information by aligning the covariance matrix with the partially observable label matrix, which opens up several avenues for future work, including the design of improved estimators for information-theoretic quantities and investigations into the suitability of different SSL methods for specific data characteristics.

\clearpage

\bibliographystyle{plainnat}
\bibliography{references}

\newpage
\pagebreak

\appendix
\renewcommand{\thesection}{\Alph{section}}

\begin{appendices}

%\Crefalias{section}{appsec}
%\Crefalias{subsection}{appsec}
%\Crefalias{subsubsection}{appsec}

%\setcounter{equation}{0}
%\renewcommand{\theequation}{\thesection.\arabic{equation}}

%\onecolumn

%%%%%%%%%%%%%%%%%%%%%%%%%%%

\vbox{%
\hsize\textwidth
\linewidth\hsize
\vskip 0.1in
\hrule height 4pt%\p@
  \vskip 0.25in
  \vskip -\parskip%
\centering
{\LARGE\bf
Appendix
\par}
\vskip 0.29in
  \vskip -\parskip
  \hrule height 1pt
  \vskip 0.09in%
}

\vspace*{5pt}

% \vspace*{-8pt}
\section*{Table of Contents}
% \vspace*{-5pt}

This appendix is organized as follows:\vspace*{-3pt}
\begin{itemize}[leftmargin=15pt]
\setlength\itemsep{2pt}

\item In \Cref{app:33}, we provide a detailed derivation of the lower bound of \Cref{eq:izz_bound} .

\item In \Cref{app:2}, we provide full proof of our Theorem \ref{cor:mixture}  on the network's representation distribution.

\item In \Cref{app:empirical_validation}, we provide additional empirical validations. Specifically, we empirically check if the optimal solution to the information maximization problem in \Cref{sec:vicreg} is the diagonal matrix.

\item In \Cref{app:em}, we show the collapse phenomenon under Gaussian Mixture Model (GMM) using Expectation Maximization (EM) and demonstrate how it is related to SSL and how we can prevent it.

\item \Cref{app:simclr} provides additional details on the SimCLR method.

\item \Cref{app:estimators} provides a detailed review of entropy estimators, their implications, assumptions, and limitations.

\item In \Cref{app:lemmas}, we provide proofs for known lemmas that we are using throughout our paper.

\item In \Cref{app:ver}, we provide detailed information on the hyperparameters, datasets, and architectures used in our experiments in \Cref{sec:estimator}.

\item In \Cref{app:1}, we provide full proof of our generalization bound for downstream tasks from \Cref{sec:gen}.

\item In \Cref{app:info_vicgreg}, we provide full proof of the theorems for \Cref{sec:vicreg} on the connection between information optimization and the VICReg objective.

\item \Cref{app:training_detalies} provides experimental details on experiments conducted in \Cref{sec:othermethods} for entropy comparison between different SSL methods.

\item In \Cref{app:repr}, we provide detailed information on the reproducibility of our study.

\item In \Cref{app:broader_impact}, we discuss the broader impact of our work. This section explores the implications, significance, and potential applications of our findings beyond the scope of the immediate study.

\end{itemize}

\clearpage

\section{\texorpdfstring{Lower bounds on $\EE_{x^\prime}\left[\log q(z|x^\prime)\right]$}{Lower bounds on E[x']}}

\label{app:33}

In this section of the supplementary material, we present the full derivation of the lower bound on $ \EE_{x^\prime}\left[\log q(z|x^\prime)\right]$.
Because $Z^\prime|X^\prime$ is a Gaussian, we can write it as $Z^\prime = \mu(x^\prime) + L(x^\prime)\epsilon$ where  $\epsilon \sim \mathcal{N}(0, 1) $ and  $L(x^\prime)^TL(x^\prime) = \Sigma(x^\prime)$. Now, setting $\Sigma_r = I$, will give us:
 \begin{align}
 \label{eq:logzz_e}
 \begin{split}
    &
     \EE_{x^\prime}\left[\log q(z|x^\prime)\right]  
     \\
     \geq &  
     \EE_{z^\prime|x^\prime}\left[\log q(z|z^\prime)\right]
      \end{split}
     \\=&
     \EE_{z^\prime|x^\prime}\left[\frac{d}{2}\log 2\pi - \frac12 \left(z-z^\prime\right)^T\left(I)\right)^{-1}\left(z-z^\prime\right)\right] \\ = &
     \frac{d}{2}\log 2\pi - \frac12 \EE_{z^\prime \mid x^\prime, }\left[\left(z-z^\prime\right)^2\right] \\ = &
      \frac{d}{2}\log 2\pi - \frac12 \EE_{ \epsilon}\left[\left(z-\mu(x^\prime) - L(x^\prime)\epsilon\right)^2\right] \\ =
      &
       \frac{d}{2}\log 2\pi - \frac12 \EE_{ \epsilon}\left[\left(z-\mu(x^\prime)\right)^2 - 2\left(z - \mu(x^\prime)*L(x^\prime) \epsilon\right) +\left(\left(L(x^\prime)\epsilon\right)^T\left(L(x^\prime) \epsilon\right)\right)\right] \\ = &
       \frac{d}{2}\log 2\pi - \frac12 \EE_{\epsilon}\left[\left(z-\mu(x^\prime)\right)^2\right] +
       \left(z-\mu(x^\prime)L(x^\prime)\right)\EE_{\epsilon}\left[\epsilon\right]-\frac{1}{2}\EE_{\epsilon}\left[\epsilon^T L(x^\prime)^T L(x^\prime) \epsilon\right] \\ = &
      \frac{d}{2}\log 2\pi - \frac12 \left(z-\mu(x^\prime)\right)^2 -\frac{1}{2}Tr\log\Sigma(x^\prime)
 \end{align}
 where $\EE_{x^\prime}\left[\log q(z|x^\prime)\right] =  \EE_{x^\prime}\left[\log \EE_{z^\prime|x^\prime} \left[q(z|z^\prime)\right]\right] \geq \EE_{z^\prime}\left[\log q(z|z^\prime)\right]  $ by   Jensen's inequality, $\EE_{\epsilon}[\epsilon]=0$  and 
 $\EE_{\epsilon}\left[\epsilon\left(L(x^\prime)^T L(x^\prime \right)\epsilon\right] = Tr\log\Sigma(x^\prime)$ by the Hutchinson's estimator.
 \begin{align}
     \EE_{z|x}\left[\EE_{z^\prime|x^\prime}\left[\log q(z|z^\prime)\right]\right] = &
     \EE_{z|x}\left[\frac{d}{2}\log 2\pi - \frac12 \left(z-\mu(x^\prime)\right)^2 -\frac{1}{2}Tr \log \Sigma(x^\prime)\right] \\ =&
          \frac{d}{2}\log 2\pi - \frac12 \EE_{z|x}\left[ \left(z-\mu(x^\prime)\right)^2\right] -\frac{1}{2}Tr \log \Sigma(x^\prime) \\ = &
              \frac{d}{2}\log 2\pi - \frac12 \EE_{\epsilon}\left[ \left(\mu(x) +L(x)\epsilon-\mu(x^\prime)\right)^2\right] -\frac{1}{2}Tr \log \Sigma(x^\prime) \\
              \begin{split}
              = &
                \frac{d}{2}\log 2\pi -  \frac12 \EE_{\epsilon}\left[ \left(\mu(x) -\mu(x^\prime)\right)^2\right] +\EE_{\epsilon}\left[\left(\mu(x) - \mu(x^\prime)\right)L(x)\epsilon\right] \\&-\frac 12\EE_{\epsilon}\left[\epsilon^TL(x)^TL(x)\epsilon\right] -\frac{1}{2}Tr\log \Sigma(x^\prime)
                \end{split}
                \\ = &
                    \frac{d}{2}\log 2\pi -  \frac12  \left(\mu(x) -\mu(x^\prime)\right)^2 -\frac 12Tr\log\Sigma(x) -\frac{1}{2}Tr\log\Sigma(x^\prime)
                    \\ =&
                        \frac{d}{2}\log 2\pi -  \frac12  \left(\mu(x) -\mu(x^\prime)\right)^2 -\frac 12 \log \left( |\Sigma(x)|  \cdot |\Sigma(x^\prime)|\right)
 \end{align}

\clearpage

\section{Data Distribution after Deep Network Transformation}
\label{app:2}

\begin{thm}
Given the setting of \Cref{eq:x_density}, the unconditional DNN output density denoted as $Z$ approximates (given the truncation of the Gaussian on its effective support that is included within a single region $\omega$ of the DN's input space partition) a mixture of the affinely transformed distributions $\bx|\bx^*_{n(\bx)}$ e.g. for the Gaussian case
$$Z\hspace{-0.1cm} \sim\hspace{-0.1cm} \sum_{n=1}^{N}\mathcal{N}\hspace{-0.1cm}\left(\hspace{-0.1cm}\bA_{\omega(\bx^*_{n})}\bx^*_{n}+\bb_{\omega(\bx^*_{n})},\bA^T_{\omega(\bx^*_{n})}\Sigma_{\bx^*_{n}}\bA_{\omega(\bx^*_{n})}\hspace{-0.1cm}\right)^{T=n},$$
where $\omega(\bx^*_{n})=\omega \in \Omega \iff \bx^*_{n} \in \omega$ is the partition region in which the prototype $\bx^*_{n}$ lives in.
\end{thm}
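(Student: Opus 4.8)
The plan is to transport the data mixture of \Cref{eq:x_density} through $f$ one component at a time, using the fact that on the effective support of each component the CPA map $f$ agrees with a single affine piece, and then to re-assemble the images into a mixture.

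First I would condition on the mixture index $T \sim \mathrm{Cat}(N)$. By the disjoint-support hypothesis of \Cref{sec:manifold}, the conditional law $X \mid T=n$ is, up to a tail of mass $O(\eps)$, the Gaussian $\mathcal{N}(\bx^*_n, \Sigma_{\bx^*_n})$ restricted to its effective support $\{\bx : p(\bx\mid T=n) > \eps\}$, and component $n$ carries weight $1/N$. Hence $Z = f(X)$ has approximate law equal to the $1/N$-weighted mixture of the pushforwards of these (essentially untruncated) Gaussians under $f$.

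Second I would use the concentration gained from refining the prototype set: as noted in \Cref{sec:output_density}, increasing $N$ shrinks each $\Sigma_{\bx^*_n}$ relative to the fixed cells of the network's input partition $\Omega$, so for $N$ large the effective support of component $n$ lies inside $\omega(\bx^*_n)$, the unique region containing $\bx^*_n$ (prototypes on cell boundaries form a null set and are discarded). On $\omega(\bx^*_n)$ we have $f(\bz) = \bA_{\omega(\bx^*_n)}\bz + \bb_{\omega(\bx^*_n)}$, where $\bA_\omega$ is the Jacobian of $f$ on $\omega$ and $\bb_\omega = f(\bz) - \bA_\omega \bz$. Since an affine image of a Gaussian is Gaussian, the $n$-th pushed-forward component is $\mathcal{N}(\bA_{\omega(\bx^*_n)}\bx^*_n + \bb_{\omega(\bx^*_n)},\, \bA^\top_{\omega(\bx^*_n)}\Sigma_{\bx^*_n}\bA_{\omega(\bx^*_n)})$ (in the transpose convention of the statement), the truncation contributing only the $O(\eps)$ error already absorbed into the ``$\approx$''. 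Summing over $n$ with weights $1/N$ yields the claimed mixture, which is the assertion of the theorem.

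The only delicate step is the second one: rigorously justifying that enlarging $N$ forces each component's effective support into a single partition cell. This is really a joint condition on the data model (the rate at which the $\Sigma_{\bx^*_n}$ shrink, and how densely the prototypes populate the manifold) and on the fixed geometry of $\Omega$ (the minimal cell diameter around each $\bx^*_n$), so the result is intrinsically an approximation rather than an exact identity; I would phrase it as ``there is a noise scale / prototype density beyond which $\{\bx : p(\bx\mid T=n)>\eps\} \subseteq \omega(\bx^*_n)$ for every $n$'', and control the discarded Gaussian tail with a standard concentration bound. Once that is in place, the conditioning argument, the affine pushforward of a Gaussian, and the final re-mixing are all routine.
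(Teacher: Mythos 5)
Your argument is essentially the paper's own proof, just written out in more detail: condition on the mixture component, use the small-noise/disjoint-support assumption to place each component's effective support inside a single partition cell where $f$ coincides with one affine piece, push the Gaussian through that affine map, and re-assemble the mixture. The paper's appendix proof is a terser version of exactly this (it simply asserts that $\int_\omega p(\bx\mid\bx^*_{n(\bx)})\,d\bx \approx 1$ makes $f$ linear on the effective support), so your additional care about when the support actually fits inside one cell is a refinement rather than a different route.
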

%\vspace{-0.2cm}
% \vspace{3pt}
\begin{proof}
 We know that If $\int_{\omega}p(\bx|\bx^*_{n(\bx)})d\bx \approx 1$ then $f$ is linear within the effective support of $p$. Therefore, any sample from $p$ will almost surely lie within a single region $\omega \in \Omega$, and therefore the entire mapping can be considered linear with respect to $p$. Thus, the output distribution is a linear transformation of the input distribution based on the per-region affine mapping.
\end{proof}

% \clearpage

%\setcounter{section}{3}

\section{Additional Empirical Validation}
\label{app:empirical_validation}
 To validate empirically Theorem \ref{th:maxvicre}, we checke empirically if the optimal solution for $$\sum_i \log \frac {\lvert\Sigma_Z\rvert} {\lvert\Sigma_{Z|X_i}\rvert \lvert\Sigma_{Z^\prime|X^\prime_i}\rvert }$$ is a diagonal matrix. We trained VICReg on ResNet18 on CIFAR-10 and did random perturbations (with different scales) for $\Sigma_Z$. Then, for each perturbation, we calculated the average distance of this perturbed matrix from a diagonal matrix and the actual value of the term $$\sum_i \log \frac {\lvert\Sigma_Z\rvert \lvert\Sigma_{Z^\prime|X^\prime _i}\rvert} {\lvert\Sigma_{Z|X_i}\rvert}$$. In \Cref{fig:my_label}, we plot the difference from the optimal value of this term as a function of the distance from the diagonal matrix. As we can see, we get an optimal solution where we are close to the diagonal matrix. This observation gives us an empirical validation of Theorem \ref{th:maxvicre}.
\begin{figure}[h!]
    \centering
   \includegraphics[width=0.6\linewidth]{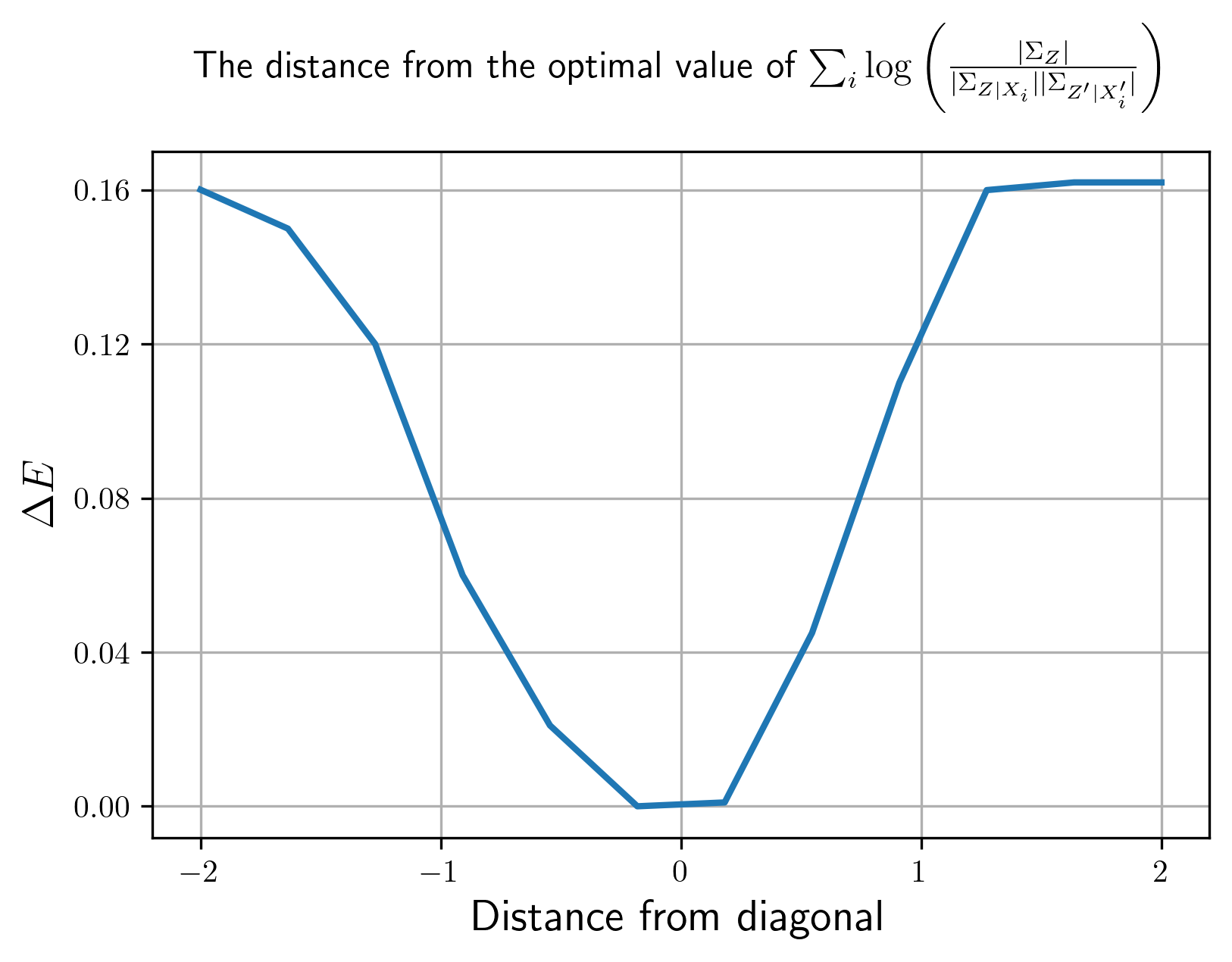}
    \caption{\textbf{The optimal solution for the optimization problem is a diagonal matrix.} The average distance from a diagonal matrix for different perturbation scales. Experiments were conducted on CIFAR-10 with the ResNet-18 network.}
    \label{fig:my_label}
\end{figure}
\section{EM and GMM}
\label{app:em}

Let us examine a toy dataset on the pattern of two intertwining moons to illustrate the collapse phenomenon under GMM (\Cref{fig:gaussian}, right). We begin by training a classical GMM with maximum likelihood, where the means are initialized based on random samples, and the covariance is used as the identity matrix. A red dot represents the Gaussian's mean after training, while a blue dot represents the data points. In the presence of fixed input samples, we observe that there is no collapsing and that the entropy of the centers is high (\Cref{fig:oneone}, left). However, when we make the input samples trainable and optimize their location, all the points collapse into a single point, resulting in a sharp decrease in entropy (\Cref{fig:oneone}, right).

To prevent collapse, we follow the K-means algorithm in enforcing sparse posteriors, i.e. using small initial standard deviations and learning only the mean. This forces a one-to-one mapping which leads all points to be closest to the mean without collapsing, resulting in high entropy (\Cref{fig:oneone} - middle, in the Appendix). Another option to prevent collapse is to use different learning rates for input and parameters. Using this setting, the collapsing of the parameters does not maximize the likelihood. \Cref{fig:gaussian} (right) shows the results of GMM with different learning rates for learned inputs and parameters. When the parameter learning rate is sufficiently high in comparison to the input learning rate, the entropy decreases much more slowly and no collapse occurs.
\iffalse
\label{tab:em}
 \begin{figure}[t]
     \centering
     \includegraphics[width=\linewidth]{}
     \caption{Evolution of the entropy for each of the learning rate configurations showing that the impact of picking the incorrect learning rate for the data and/or centroids leads to a collapse of the samples.}
     \label{fig:entropies}
 \end{figure}
\fi

 \begin{figure}[t]
     \centering
     \includegraphics[width=0.9\linewidth]{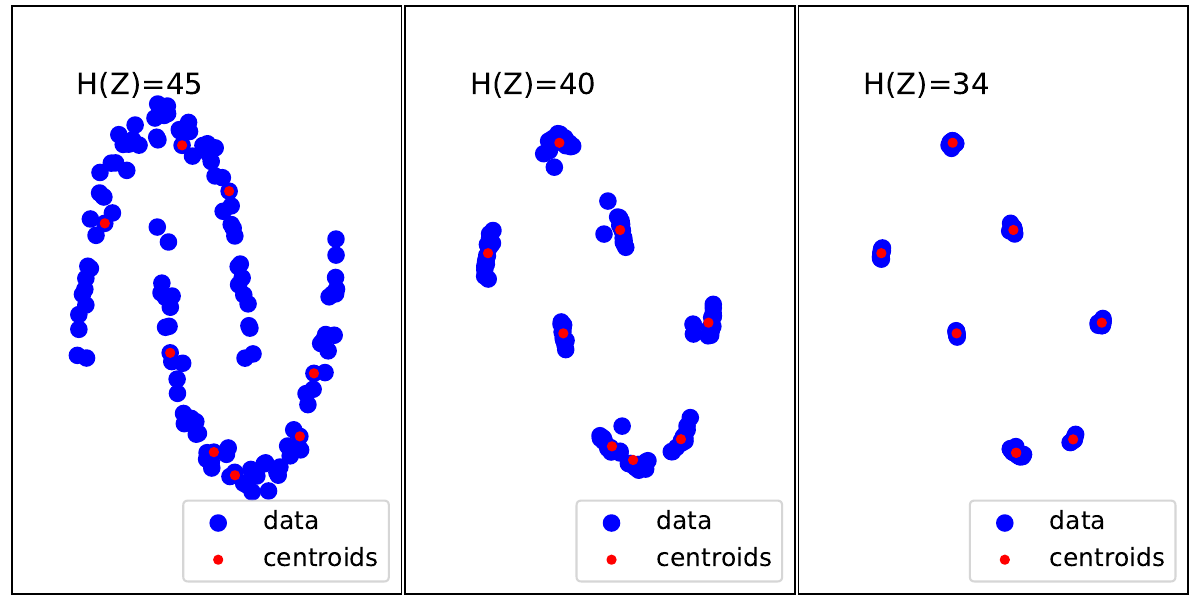}
     \caption{\textbf{Evolution of GMM training when enforcing a one-to-one mapping between the data and centroids akin to K-means i.e. using a small and fixed covariance matrix. We see that collapse does not occur.} Left - In the presence of fixed input samples, we observe that there is no collapsing and
that the entropy of the centers is high. Right - when we make
the input samples trainable and optimize their location, all the points collapse into a single point,
resulting in a sharp decrease in entropy.}
     \label{fig:oneone}
 \end{figure}

\section{SimCLR}
\label{app:simclr}
In contrastive learning, different augmented views of the same image are attracted (positive pairs), while different augmented views are repelled (negative pairs). MoCo \citep{he2020momentum} and SimCLR \citep{chen2020simple} are recent examples of self-supervised visual representation learning that reduce the gap between self-supervised and fully-supervised learning. 
SimCLR applies randomized augmentations to an image to create two different views, $x$ and $y$, and encodes both of them with a shared encoder, producing representations $r_x$ and $r_y$. Both $r_x$ and $r_y$ are $l2$-normalized. The SimCLR version of the InfoNCE objective is:
\begin{align*}
    \mathbb{E}_{x,y}\left[-\log \left(\frac{e^{\frac1\eta r_y^Tr_x  }}{\sum_{k=1}^K{e^{\frac 1\eta r_{y_k}^Tr_x}}}\right)\right] ,
\end{align*}
where $\eta$ is a temperature term and $K$ is the number of views in a minibatch.

\section{Entropy Estimators }
\label{app:estimators}
Entropy estimation is one of the classical problems in information theory, where Gaussian mixture density is one of the most popular representations. With a sufficient number of components, they can approximate any smooth function with arbitrary accuracy. For Gaussian mixtures, there is, however, no closed-form solution to differential entropy. There exist several approximations in the literature, including loose upper and lower bounds \citep{entropyapprox2008}. Monte Carlo (MC) sampling is one way to approximate Gaussian mixture entropy. With sufficient MC samples, an unbiased estimate of entropy with an arbitrarily accurate can be obtained. Unfortunately, MC sampling is very computationally expensive and typically requires a large number of samples, especially in high dimensions \citep{brewer2017computing}. Using the first two moments of the empirical distribution, VIGCreg used one of the most straightforward approaches for approximating the entropy. Despite this, previous studies have found that this method is a poor approximation of the entropy in many cases \cite{entropyapprox2008}. Another option is to use the LogDet function. Several estimators have been proposed to implement it, including uniformly minimum variance unbiased (UMVU) \citep{30996}, and Bayesian methods \cite{MISRA2005324}. These methods, however, often require complex optimizations. The LogDet estimator presented in \cite{zhouyin2021understanding} used the differential entropy $\alpha$ order entropy using scaled noise. They demonstrated that it can be applied to high-dimensional features and is robust to random noise. Based on Taylor-series expansions, \cite{entropyapprox2008} presented a lower bound for the entropy of Gaussian mixture random vectors. They use Taylor-series expansions of the logarithm of each Gaussian mixture component to get an analytical evaluation of the entropy measure. In addition, they present a technique for splitting Gaussian densities to avoid components with high variance, which would require computationally expensive calculations. \citet{kolchinsky2017estimating} introduce a novel family of estimators for the mixture entropy. For this family, a pairwise-distance function between component densities is defined for each member. These estimators are computationally efficient as long as the pairwise-distance function and the entropy of each component distribution are easy to compute. Moreover, the estimator is continuous and smooth and is therefore useful for optimization problems. In addition, they presented both a lower bound (using Chernoff distance) and an upper bound (using the KL divergence) on the entropy, which are exact when the component distributions are grouped into well-separated clusters,
\label{app:methods}

% \section{Empirical validation of our assumption}
% \label{app:empiricalvalidation}
% We will try to verify empirically our assumptions on different datasets We compute the pairwise $l2$ distances between images for seven datasets: MNIST, CIFAR10, CIFAR100, Flowers102, Food101, and FGVAircaft.  We found that even for raw pixels, the pairwise distances are far from zero, which means you can use a small Gaussian around each point without overlapping. Consequently, the effective supports of these high-dimensional datasets are not overlapping, and our assumption is realistic even for current popular SSL datasets.

% \begin{figure}
%     \centering
%     \label{fig:overlapping}

%     \includegraphics[width=\linewidth]{figures/overlapping_proba.pdf}
%     \caption{\textbf{The Gaussians around each point are not overlapping} The $l2$ distances between raw images for different datasets}
%     \label{fig:my_label}
% \end{figure}

\section{Known  Lemmas} 
\label{app:lemmas}
We use the following well-known theorems as lemmas in our proofs. We put these below for completeness.
These are classical results and \textit{not} our results.
\begin{lemma} \label{lemma:trivial:2}
\emph{(Hoeffding's inequality)} Let $X_1, ..., X_n$ be independent random variables such that ${\displaystyle a_{}\leq X_{i}\leq b_{}}$ almost surely. Consider the average of these random variables,
${\displaystyle S_{n}=\frac{1}{n}(X_{1}+\cdots +X_{n}).}$
Then,  for all $t > 0$,
$$
\PP_S \left( \mathrm {E} \left[S_{n}\right]-S_{n} \ge (b-a) \sqrt{\frac{\ln(1/\delta)}{2n} }\right) \leq \delta,
$$
and
$$
\PP_S \left( S_{n} -\mathrm {E} \left[S_{n}\right]\ge (b-a) \sqrt{\frac{\ln(1/\delta)}{2n} }\right) \leq \delta.
$$
\end{lemma}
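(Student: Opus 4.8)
The plan is to prove the two tail bounds by the classical Cram\'er--Chernoff method together with Hoeffding's lemma, and then to invert the resulting sub-Gaussian tail to recover the stated form in terms of $\delta$. It suffices to prove the first inequality: the second follows by applying the first to the variables $-X_1,\dots,-X_n$, which satisfy $-b \le -X_i \le -a$ almost surely, since this swaps the roles of upper and lower deviations of the average.

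First I would set $Y_i = \E[X_i]-X_i$, so that $\E[Y_i]=0$, each $Y_i$ lies in an interval of length $b-a$, and $\E[S_n]-S_n = \frac1n\sum_{i=1}^n Y_i$. For any $\lambda>0$, Markov's inequality applied to the nonnegative random variable $e^{\lambda(\E[S_n]-S_n)}$ and the independence of the $X_i$ (hence of the $Y_i$) give
\begin{align*}
\Pr(\E[S_n]-S_n \ge t) \le e^{-\lambda t}\,\E\bigl[e^{(\lambda/n)\sum_{i=1}^n Y_i}\bigr] = e^{-\lambda t}\prod_{i=1}^n \E\bigl[e^{(\lambda/n) Y_i}\bigr].
\end{align*}

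The key step is Hoeffding's lemma: if $Y$ is zero-mean with $Y\in[c,d]$ almost surely, then $\E[e^{sY}]\le e^{s^2(d-c)^2/8}$ for all $s\in\R$. I would prove this by convexity of $u\mapsto e^{su}$ on $[c,d]$: write $e^{sy}\le \frac{d-y}{d-c}e^{sc}+\frac{y-c}{d-c}e^{sd}$, take expectations, and use $\E[Y]=0$ to obtain $\E[e^{sY}]\le e^{\psi(s)}$ for an explicit function $\psi$ with $\psi(0)=\psi'(0)=0$ and $\psi''(s)=p(s)(1-p(s))(d-c)^2 \le (d-c)^2/4$ for a certain $p(s)\in[0,1]$; a second-order Taylor expansion then yields $\psi(s)\le s^2(d-c)^2/8$. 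Applying this with $s=\lambda/n$ and $d-c=b-a$ to each factor gives $\E[e^{(\lambda/n)Y_i}]\le e^{\lambda^2(b-a)^2/(8n^2)}$, so that
\begin{align*}
\Pr(\E[S_n]-S_n \ge t) \le \exp\Bigl(-\lambda t + \frac{\lambda^2(b-a)^2}{8n}\Bigr).
\end{align*}

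Finally I would optimize the exponent over $\lambda>0$: it is minimized at $\lambda^\star = 4nt/(b-a)^2$, which gives the sub-Gaussian tail $\Pr(\E[S_n]-S_n \ge t)\le \exp(-2nt^2/(b-a)^2)$. Choosing $t = (b-a)\sqrt{\ln(1/\delta)/(2n)}$ makes the exponent equal to $-\ln(1/\delta)$, so the probability is at most $\delta$, which is exactly the claimed statement; the companion inequality follows verbatim after replacing $X_i$ by $-X_i$. The only non-routine ingredient is the Taylor-remainder estimate inside Hoeffding's lemma, so that is the step I would write out with most care; the rest is Markov's inequality, tensorization by independence, and a scalar optimization.
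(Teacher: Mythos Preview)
Your proof is correct. Both you and the paper arrive at the sub-Gaussian tail bound $\Pr(\E[S_n]-S_n \ge t)\le \exp(-2nt^2/(b-a)^2)$ and then invert it by setting the right-hand side equal to $\delta$ and solving for $t$. The only difference is scope: the paper treats this exponential tail bound as the classical Hoeffding inequality and simply cites it, so its entire ``proof'' is the two-line inversion $\delta=\exp(-2nt^2/(b-a)^2)\Rightarrow t=(b-a)\sqrt{\ln(1/\delta)/(2n)}$. You additionally derive the tail bound from scratch via the Cram\'er--Chernoff method and Hoeffding's lemma. Your argument is thus a strict superset of the paper's --- more self-contained, but not a different route.
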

\begin{proof}
By using Hoeffding's inequality, we have that
for all $t>0$,
$$
\PP_S \left( \mathrm {E} \left[S_{n}\right]-S_{n} \ge t\right)\leq \exp \left(-{\frac {2nt^{2}}{(b-a)^{2}}}\right),
$$
and 
$$
\PP_S \left(S_{n} - \mathrm {E} \left[S_{n}\right]\ge t\right)\leq \exp \left(-{\frac {2nt^{2}}{(b-a)^{2}}}\right),
$$
Setting $\delta=\exp \left(-{\frac {2nt^{2}}{(b-a)^{2}}}\right)$ and solving for $t>0$, 
\begin{align*}
& 1/\delta=\exp \left({\frac {2nt^{2}}{(b-a)^{2}}}\right) 
\\ & \Longrightarrow
\ln(1/\delta)= {\frac {2nt^{2}}{(b-a)^{2}}}
\\ & \Longrightarrow
\frac{(b-a)^{2}\ln(1/\delta)}{2n}= t^2
\\ & \Longrightarrow t =(b-a) \sqrt{\frac{\ln(1/\delta)}{2n} }
\end{align*} 
\vspace*{-5pt}
\end{proof}
\vspace*{-5pt}
It has been shown that   generalization bounds can be obtained via Rademacher complexity  \citep{bartlett2002rademacher,mohri2012foundations,shalev2014understanding}. The following  is a trivial modification of  \citep[Theorem 3.1]{mohri2012foundations} for a one-sided bound on the nonnegative general loss functions:
\begin{lemma} \label{lemma:trivial:1}
Let $\mathcal{G}$  be a  set of functions with the codomain  $[0, M]$. Then, for any $\delta>0$, with probability at least $1-\delta$ over  an i.i.d. draw of $m$ samples  $S=(q_{i})_{i=1}^m$, the following holds for all  $ \psi \in \Gcal$:
\begin{align}
\SwapAboveDisplaySkip
\EE_{q}[\psi(q)]
\le \frac{1}{m}\sum_{i=1}^{m} \psi(q_{i})+2\Rcal_{m}(\Gcal)+M \sqrt{\frac{\ln(1/\delta)}{2m}}, \end{align}
where $\Rcal_{m}(\Gcal):=\EE_{S,\xi}[\sup_{\psi \in \Gcal}\frac{1}{m} \sum_{i=1}^m \xi_i \psi(q_{i})]$ and $\xi_1,\dots,\xi_m$ are independent uniform random variables taking values in $\{-1,1\}$. 
\end{lemma}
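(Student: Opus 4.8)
The plan is to prove this by the classical two-step argument underlying all Rademacher-complexity generalization bounds: a concentration step via McDiarmid's bounded-differences inequality, followed by a symmetrization step that produces the Rademacher average. Since the statement is only a one-sided bound for loss functions with codomain $[0,M]$, just minor bookkeeping distinguishes it from the textbook version, and the argument can ultimately be cited as a trivial modification of \citep[Theorem 3.1]{mohri2012foundations}.

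First I would introduce the one-sided supremum deviation $\Phi(q_1,\dots,q_m)=\sup_{\psi\in\Gcal}\big(\EE_{q}[\psi(q)]-\frac1m\sum_{i=1}^m\psi(q_i)\big)$ and observe that, because every $\psi\in\Gcal$ maps into $[0,M]$, replacing any single coordinate $q_i$ by an arbitrary $q_i'$ changes $\frac1m\sum_j\psi(q_j)$ by at most $M/m$ uniformly over $\psi$, so $\Phi$ has bounded differences with constant $M/m$ in each coordinate. McDiarmid's inequality (a classical generalization of the Hoeffding bound of Lemma~\ref{lemma:trivial:2}) then gives that, with probability at least $1-\delta$ over the draw of $S$, $\Phi(S)\le\EE_S[\Phi(S)]+M\sqrt{\ln(1/\delta)/(2m)}$. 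Since $\EE_q[\psi(q)]\le\frac1m\sum_i\psi(q_i)+\Phi(S)$ holds for every $\psi$ simultaneously, the problem reduces to showing $\EE_S[\Phi(S)]\le2\Rcal_{m}(\Gcal)$.

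For that bound I would use symmetrization with a ghost sample. Write $\EE_q[\psi(q)]=\EE_{S'}\big[\frac1m\sum_i\psi(q_i')\big]$ for an i.i.d.\ copy $S'=(q_i')_{i=1}^m$, pull the supremum out of the inner expectation by Jensen's inequality to get $\EE_S[\Phi(S)]\le\EE_{S,S'}\sup_{\psi}\frac1m\sum_i(\psi(q_i')-\psi(q_i))$, then exploit that each pair $(q_i,q_i')$ is exchangeable: inserting i.i.d.\ Rademacher signs $\xi_i$ leaves the joint law of $(\psi(q_i')-\psi(q_i))_i$ unchanged, and splitting the resulting symmetrized sum over its $S'$-part and $S$-part with the triangle inequality yields $\EE_S[\Phi(S)]\le2\,\EE_{S,\xi}\sup_{\psi}\frac1m\sum_i\xi_i\psi(q_i)=2\Rcal_{m}(\Gcal)$. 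Combining this with the McDiarmid bound gives the stated inequality.

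There is no genuine obstacle here — this is a standard result. The only points needing care are (i) confirming that the bounded-differences constant is $M/m$ because the codomain is $[0,M]$ rather than a general interval $[a,b]$, so the concentration term carries the factor $M$; and (ii) the sign-flipping step, where one splits the symmetrized difference into two Rademacher averages of identical expectation and applies the triangle inequality to obtain the factor $2$. Alternatively, the whole proof can be replaced by a citation to \citep[Theorem 3.1]{mohri2012foundations}, noting that only the upper-tail direction is needed and that the range $[0,M]$ replaces the general bounded range there.
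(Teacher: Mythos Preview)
Your proposal is correct and follows essentially the same approach as the paper's own proof: define the one-sided supremum deviation, apply McDiarmid's bounded-differences inequality with constant $M/m$, and then bound its expectation by $2\Rcal_m(\Gcal)$ via the standard ghost-sample symmetrization with Rademacher signs. The paper likewise notes this is a trivial modification of \citep[Theorem 3.1]{mohri2012foundations}.
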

\begin{proof}
 Let $S=(q_{i})_{i=1}^m$ and $S'=(q_{i}')_{i=1}^m$. Define 
\begin{align}
\varphi(S)= \sup_{\psi \in \Gcal} \EE_{x,y}[\psi(q)]-\frac{1}{m}\sum_{i=1}^{m}\psi(q_{i}).
\end{align} 
To apply McDiarmid's inequality to $\varphi(S)$, we compute an upper bound on $|\varphi(S)-\varphi(S')|$ where  $S$ and $S'$ be two test datasets differing by exactly one point of an arbitrary index $i_{0}$; i.e.,  $S_i= S'_i$ for all $i\neq i_{0}$ and $S_{i_{0}} \neq S'_{i_{0}}$. Then,
\begin{align}
\varphi(S')-\varphi(S) \le\sup_{\psi \in \Gcal}\frac{\psi(q_{i_0})-\psi(q'_{i_0})}{m} \le \frac{M}{m}.
\end{align}
Similarly, $\varphi(S)-\varphi(S')\le \frac{M}{m} $.  Thus, by McDiarmid's inequality, for any $\delta>0$, with probability at least $1-\delta$,
\begin{align}
\SwapAboveDisplaySkip
\varphi(S) \le  \EE_{S}[\varphi(S)] + M \sqrt{\frac{\ln(1/\delta)}{2m}}.
\end{align}
Moreover, 
\begin{align}
&\EE_{S}[\varphi(S)]
=
\EE_{S}\left[\sup_{\psi \in \Gcal} \EE_{S'}\left[\frac{1}{m}\sum_{i=1}^{m}\psi_{}(q_i')\right]-\frac{1}{m}\sum_{i=1}^{m}\psi(q_i)\right]   
 \\ &  \le\EE_{S,S'}\left[\sup_{\psi \in \Gcal} \frac{1}{m}\sum_{i=1}^m (\psi(q'_i)-\psi(q_i))\right] 
 \\ & \le \EE_{\xi, S, S'}\left[\sup_{\psi \in \Gcal} \frac{1}{m}\sum_{i=1}^m  \xi_i(\psi_{}(q'_{i})-\psi(q_i))\right]
 \\ &  \le2\EE_{\xi, S}\left[\sup_{\psi \in \Gcal} \frac{1}{m}\sum_{i=1}^m  \xi_i\psi(q_i)\right] =2\Rcal_{m}(\Gcal),
\end{align}
where  the first line follows the definitions of each term, the second line uses Jensen's inequality and the convexity of  the 
supremum, and the third line follows that for each $\xi_i \in \{-1,+1\}$, the distribution of each term $\xi_i (\ell(f_{}(x'_i),y'_i)-\ell(f(x_i),y_i))$ is the  distribution of  $(\ell(f_{}(x'_i),y'_i)-\ell(f(x_i),y_i))$  since $S$ and $S'$ are drawn iid with the same distribution. The fourth line uses the subadditivity of supremum.  
\end{proof}

\clearpage

\section{Implentation Details for Maximizing Entropy Estimators }
\label{app:ver}
In this section, we will provide more details on the implantation of the experiments conducted in \Cref{sec:estimator}. 

\textbf{Setup} Our experiments are conducted on CIFAR-10 \cite{krizhevsky2009learning}. We use ResNet-18 \citep{he2016deep} as our backbone.

\textbf{Training Procedure}: The experimental process is organized into two sequential stages: unsupervised pretraining followed by linear evaluation. Initially, the unsupervised pretraining phase is executed, during which the encoder network is trained. Upon its completion, we transition to the linear evaluation phase, which serves as an assessment tool for the quality of the representation produced by the pretrained encoder.

Once the pretraining phase is concluded, we adhere to the fine-tuning procedures used in established baseline methods, as described by \cite{caron2020unsupervised}.

During the linear evaluation stage, we start by performing supervised training of the linear classifier. This is achieved by using the representations derived from the encoder network while keeping the network's coefficients frozen, and applying the same training dataset. Subsequently, we measure the test accuracy of the trained linear classifier using a separate validation dataset. This approach allows us to evaluate the performance of our model in a robust and systematic manner.

The training process for each model unfolds over 800 epochs, employing a batch size of 512. We utilize the Stochastic Gradient Descent (SGD) optimizer, characterized by a momentum of 0.9 and a weight decay of $1e^-4$. The learning rate is initiated at 0.5 and is adjusted according to a cosine decay schedule complemented by a linear warmup phase.

During the data augmentation process, two enhanced versions of every input image are generated. This involves cropping each image randomly and resizing it back to the original resolution. The images are then subject to random horizontal flipping, color jittering, grayscale conversion, Gaussian blurring, and polarization for further augmentation.

For the linear evaluation phase, the linear classifier is trained for 100 epochs with a batch size of 256. The SGD optimizer is again employed, this time with a momentum of 0.9 and no weight decay. The learning rate is managed using a cosine decay schedule, starting at 0.2 and reaching a minimum of $2e-4$.

% \begin{table}[]
% \label{tab:results_app}
% \begin{tabular}{ll|l|}
% Method                       &                        & Accuracy                     \\ \hline 

% SimCLR                       &                       & $89.72 \pm 0.05$      \\
% \multicolumn{2}{l|}{Barlow Twins}                     & $88.81 \pm 0.1$        \\
% VICReg                       &                       & $89.32  \pm 0.09$      \\
% \multicolumn{2}{l|}{VICReg   + Pairwise Distances Estimator (ours)}     & $\mathbf{90.09 \pm 0.09}$       \\
% \multicolumn{2}{l|}{VICReg + Log Derminate Estimator (ours)} & $89.77  \pm 0.08$   \\ \hline 
% \end{tabular}

% \caption{\textbf{Entropy estimator achieved better results on SSL } - CIFAR10 accuracy on linear evaluation of SSL for different entropy estimators. The best results achieved by pairwise distances lower bound}

% \end{table}
% \section{Expectation Maximization and Collapsing}

\section{A Generalization Bound for Downstream Tasks} \label{app:1}
In this Appendix, we present the complete version of Theorem \ref{thm:1} along with its proof and additional discussions. 

\subsection{Additional Notation and details}

We start to introduce additional notation and details.  We use the notation of   $x \in \Xcal$ for an input and  $y \in \Ycal \subseteq \RR^r$ for an output. Define $p(y)=\Pr(Y=y)$ to be the probability of getting label $y$ and $\hp(y)=\frac{1}{n}\sum_{i=1}^n \one\{y_i=y\}$ to be the empirical estimate of $p(y)$.
Let $\zeta$  be an upper bound on the norm of the label as $\|y\|_{2} \le \zeta$ for all $y \in \Ycal$.
Define the  minimum norm solution $W_\bS$ of the unlabeled data as $W_\bS=\mini_{W'} \|W'\|_{F}$ s.t. $W'\in \argmin_W \frac{1}{m} \sum_{i=1}^{m} \|W_{} f_\theta(\xp_{i})-g^*(\xp_i)\|^2$. Let   $\kappa_{S}$  be a data-dependent upper bound on the per-sample Euclidian norm loss with the trained model as $\|W_{S}f_\theta(x)-y\| \le \kappa_{S}$ for all $(x,y) \in  \Xcal \times \Ycal$. Similarly, let $\kappa_{\bS}$  be a data-dependent upper bound on the per-sample Euclidian norm loss as $\|W_{\bS}f_\theta(x)-y\| \le \kappa_{\bS}$ for all $(x,y) \in  \Xcal \times \Ycal$. Define the difference between $W_S$ and $W_\bS$ by $c=\|W_S-W_\bS\|_{2}$. 
Let $\Wcal$ be a hypothesis space of $W$ such  that $W_\bS \in \Wcal$. We denote by  $\tilde \Rcal_{m}(\Wcal \circ \Fcal)=\frac{1}{\sqrt{m}}\EE_{\bS,\xi}[\sup_{W\in \Wcal, f\in\Fcal} \sum_{i=1}^m \xi_i\|g^{*}(\xp_{i})-W_{}f(\xp_{i})\|]$   the normalized Rademacher complexity of the set $\{\xp_{} \mapsto\|g^{*}(\xp_{})-W_{}f(\xp_{})\|:W \in \Wcal, f \in \Fcal\}$. we denote by $\kappa_{}$  a  upper bound on the per-sample Euclidian norm loss  as $\|Wf(x)-y\| \le \kappa_{}$ for  all $(x,y,W,f) \in  \Xcal \times \Ycal \times \Wcal\times \Fcal$.

We adopt the following data-generating process model that was used in a previous paper on analyzing contrastive learning \citep{saunshi2019theoretical, ben2018attentioned}. For the labeled data, first, $y$ is drawn from the distribution $\rho$  on $\Ycal$, and then $x$ is drawn from the conditional distribution $\Dcal_{y}$ conditioned on the label $y$. That is, we have  the join distribution $\Dcal(x, y)=\Dcal _{y}(x)\rho(y)$ with $((x_i, y_i))_{i=1}^n \sim\Dcal^{n}$. For the unlabeled data,  first, each of the \textit{unknown} labels $y^{+}$ and $y^-$  is drawn  from the distritbuion $\rho$, and  then each of the  positive examples $\xp$ and $\xpp$ is drawn from the conditional distribution $\Dcal_{y^{+}}$ while the negative example  $\xn$ is  drawn from the  $\Dcal_{y^-}$.   Unlike the analysis of contrastive learning, we do not require negative samples.
Let $\tau_{\bS}$  be a data-dependent upper bound on the invariance loss with the trained representation as $\|f_\theta(\bbx)-f_\theta(x)\| \le \tau_{\bS}$   for all $(\bbx,x) \sim \Dcal_{y}^2$ and $y \in \Ycal$.
 Let   $\tau$  be a data-independent upper bound on the invariance loss with the trained representation  as$ \|f(\bbx)-f(x)\| \le \tau$ for all $(\bbx,x) \sim \Dcal_{y}^2$, $y \in \Ycal$, and $f \in \Fcal$.  For simplicity, we assume that there exists a function $g^*$ such that $y=g^{*}(x)\in \RR^r$ for all $(x,y) \in \Xcal \times \Ycal$. Discarding this assumption adds the average of label noises to the final result, which goes to zero as the sample sizes $n$ and $m$ increase, assuming that the mean of the label noise is zero.

%\section*{\LARGE Supplementary Material}
%\label{sec:appendix}

\renewcommand{\thesection}{\arabic{section}}
\setcounter{section}{0}

\subsection{Proof of Theorem \ref{thm:1}} \label{app:1:1}
\begin{proof}[Proof of Theorem \ref{thm:1}]
Let $W=W _{S}$ where $W_S$ is the the minimum norm solution as  $W_S =\mini_{W'} \|W'\|_{F}$ s.t. $W'\in \argmin_{W} \frac{1}{n} \sum_{i=1}^n \|W f_\theta(x_i)-y_i\|^2$. Let $W^*=W_\bS$ where $W_\bS$ is the minimum norm solution as $W^{*}=W_{\bS}=\mini_{W'} \|W'\|_{F}$ s.t. $W'\in \argmin_{W} \frac{1}{m} \sum_{i=1}^{m} \|W f_\theta(\xp_{i})-g^*(\xp_i)\|^2$.  Since $y=g^{*}(x)$, 
\begin{align*}
y=g^{*}(x)  \pm W^*  f_\theta(x) =W^*  f_\theta(x) +(g^{*}(x)-W^*  f_\theta(x))=W^*  f_\theta(x) +\varphi(x) \end{align*}
where $\varphi(x)=g^{*}(x)-W^*  f_\theta(x)$.
Define $L_{S}(w)= \frac{1}{n} \sum_{i=1}^n \|W f_\theta(x_i)-y_i\|$. Using these, \begin{align*}
L_{S}(w) &= \frac{1}{n} \sum_{i=1}^n \|W f_\theta(x_i)-y_i\|
\\ & =\frac{1}{n} \sum_{i=1}^n \|W f_\theta(x_i)-W^*  f_\theta(x_{i}) -\varphi(x_{i})\| \\ & \ge\frac{1}{n} \sum_{i=1}^n \|W f_\theta(x_i)-W^*  f_\theta(x_{i})\| -\frac{1}{n} \sum_{i=1}^n\|\varphi(x_{i})\| 
\\ & =\frac{1}{n} \sum_{i=1}^n \|\tW f_\theta(x_i)\| - \frac{1}{n} \sum_{i=1}^n \|\varphi(x_{i})\| 
\end{align*}
where $\tW =W-W^*$. We now consider new fresh samples $\bbx_{i} \sim\Dcal _{y_{i}}$ for $i=1,\dots, n$ to rewrite the above further as:
\begin{align*}
L_{S}(w) &\ge \frac{1}{n} \sum_{i=1}^n \|\tW f_\theta(x_i)\pm\tW f_\theta(\bbx_i)\| - \frac{1}{n} \sum_{i=1}^n \|\varphi(x_{i})\|
\\ & =\frac{1}{n} \sum_{i=1}^n \|\tW f_\theta(\bbx_i)-(\tW f_\theta(\bbx_i)-\tW f_\theta(x_i))\| - \frac{1}{n} \sum_{i=1}^n \|\varphi(x_{i})\| 
\\ & \ge\frac{1}{n} \sum_{i=1}^n \|\tW f_\theta(\bbx_i)\| -\frac{1}{n} \sum_{i=1}^{n}\|\tW f_\theta(\bbx_i)-\tW f_\theta(x_i)\| - \frac{1}{n} \sum_{i=1}^n \|\varphi(x_{i})\| \\ & =\frac{1}{n} \sum_{i=1}^n \|\tW f_\theta(\bbx_i)\| -\frac{1}{n} \sum_{i=1}^{n}\|\tW (f_\theta(\bbx_i)-f_\theta(x_i))\| - \frac{1}{n} \sum_{i=1}^n \|\varphi(x_{i})\| \end{align*}   
This implies that 
$$
\frac{1}{n} \sum_{i=1}^n \|\tW f_\theta(\bbx_i)\| \le L_{S}(w)+\frac{1}{n} \sum_{i=1}^{n}\|\tW (f_\theta(\bbx_i)-f_\theta(x_i))\| + \frac{1}{n} \sum_{i=1}^n \|\varphi(x_{i})\|.
$$
Furthermore, since $y=W^*  f_\theta(x) +\varphi(x)$, by writing $\bby_{i}=W^*  f_\theta(\bbx_i) +\varphi(\bbx_i)$ (where $\bby_i = y_i$ since  $\bbx_{i} \sim\Dcal _{y_{i}}$ for $i=1,\dots, n$),
\begin{align*}
\frac{1}{n} \sum_{i=1}^n \|\tW f_\theta(\bbx_i)\| &=\frac{1}{n} \sum_{i=1}^n \|Wf_\theta(\bbx_i)-W^* f_\theta(\bbx_i)\|
\\ & =\frac{1}{n} \sum_{i=1}^n \|Wf_\theta(\bbx_i)-\bby_i+\varphi(\bbx_i)\| \\ & \ge\frac{1}{n} \sum_{i=1}^n \|Wf_\theta(\bbx_i)-\bby_i\|-\frac{1}{n} \sum_{i=1}^n\|\varphi(\bbx_i) \|  
\end{align*}
Combining these, we have that 
\begin{align} \label{eq:1}
\frac{1}{n} \sum_{i=1}^n \|Wf_\theta(\bbx_i)-\bby_i\| &\le  L_{S}(w)+\frac{1}{n} \sum_{i=1}^{n}\|\tW (f_\theta(\bbx_i)-f_\theta(x_i))\| 
\\ \nonumber & \quad + \frac{1}{n} \sum_{i=1}^n \|\varphi(x_{i})\|+\frac{1}{n} \sum_{i=1}^n\|\varphi(\bbx_i) \|. 
\end{align}
 To bound the left-hand side of \eqref{eq:1}, we now analyze the following random variable:
\begin{align} \label{eq:5}
\EE_{X,Y}[\|W _{S}f_\theta(X)-Y\|]   - \frac{1}{n} \sum_{i=1}^n \|W_{S}f_\theta(\bbx_i)-\bby_i\|,
\end{align}
where $\bby_i = y_i$ since  $\bbx_{i} \sim\Dcal _{y_{i}}$ for $i=1,\dots, n$. Importantly, this means that  as $W_{S}$ depends on $y_i$, $W_{S}$ depends on $\bby_i$. Thus, the collection of random variables $\|W_{S}f_\theta(\bbx_1)-\bby_1\|,\dots,\|W_{S}f_\theta(n_n)-\bby_n\|$ is \textit{not} independent. Accordingly, we cannot apply standard concentration inequality to bound  \eqref{eq:5}.
A standard approach in learning theory is to first bound   \eqref{eq:5} by $\EE_{x,y}\|W _{S}f_\theta(x)-y\|   - \frac{1}{n} \sum_{i=1}^n \|W_{S}f_\theta(\bbx_i)-\bby_i\| \le\sup_{W \in \Wcal}\EE_{x,y}\|Wf_\theta(x)-y\|   - \frac{1}{n} \sum_{i=1}^n \|Wf_\theta(\bbx_i)-\bby_i\|$ for some hypothesis space $\Wcal$ (that is independent of $S$) and realize that the right-hand side now contains the collection of independent random variables $\|W_{}f_\theta(\bbx_1)-\bby_1\|,\dots,\|W_{}f_\theta(n_n)-\bby_n\|$ , for which we can utilize standard concentration inequalities. This reasoning leads to the Rademacher complexity of the hypothesis space $\Wcal$. However, the complexity of the hypothesis space $\Wcal$ can be very large, resulting in a loose bound.  In this proof, we show that we can avoid the dependency on hypothesis space $\Wcal$ by using a very different approach with conditional expectations to take care the dependent random variables $\|W_{S}f_\theta(\bbx_1)-\bby_1\|,\dots,\|W_{S}f_\theta(n_n)-\bby_n\|$.
Intuitively, we utilize the fact that for these dependent random variables, there is a structure of conditional independence, conditioned on each $y \in \Ycal$.  

We first write the expected loss as the sum of the conditional expected loss:
\begin{align*}
\EE_{X,Y}[\|W _{S}f_\theta(X)-Y\|]&=\sum_{y\in \Ycal} \EE_{X,Y}[\|W _{S}f_\theta(X)-Y\| \mid Y = y]\Pr(Y =  y)
\\ & =\sum_{y\in \Ycal}\EE_{X_{y}}[\|W _{S}f_\theta(X_{y})-y \|]\Pr(Y =  y),
\end{align*}
where $X_{y}$ is the random variable for the conditional with $Y=y$. 
Using this, we  decompose \eqref{eq:5}  into two terms:
\begin{align} \label{eq:6}
&\EE_{X,Y}[\|W _{S}f_\theta(X)-Y\|]   - \frac{1}{n} \sum_{i=1}^n \|W_{S}f_\theta(\bbx_i)-\bby_i\| 
\\ \nonumber & =  \left(\sum_{y\in \Ycal} \EE_{X_{y}}[\|W _{S}f_\theta(X_{y})-y \|]\frac{|\Ical_{y}|}{n}-  \frac{1}{n} \sum_{i=1}^n \|W_{S}f_\theta(\bbx_i)-\bby_i\|\right) \\ \nonumber & \quad+\sum_{y\in \Ycal} \EE_{X_{y}}[\|W _{S}f_\theta(X_{y})-y \|]\left(\Pr(Y =  y)- \frac{|\Ical_{y}|}{n}\right),
\end{align}
where
$$
\Ical_{y}=\{i\in[n]: y_{i}=y\}.
$$
The first term in the right-hand side  of \eqref{eq:6} is further simplified by using 
$$
  \frac{1}{n} \sum_{i=1}^n \|W_{S}f_\theta(\bbx_i)-\bby_i\|=\frac{1}{n}\sum_{y \in \Ycal}  \sum_{i \in \Ical_{y}}  \|W_{S}f_\theta(\bbx_i)-y\|, 
$$
as 
\begin{align*}
 &\sum_{y\in \Ycal} \EE_{X_{y}}[\|W _{S}f_\theta(X_{y})-y \|]\frac{|\Ical_{y}|}{n}-  \frac{1}{n} \sum_{i=1}^n \|W_{S}f_\theta(\bbx_i)-\bby_i\|
 \\ & =\frac{1}{n}\sum_{y \in \tYcal}  |\Ical_{y}|\left(\EE_{X_{y}}[\|W _{S}f_\theta(X_{y})-y \|]-\frac{1}{|\Ical_{y}|}\sum_{i \in \Ical_{y}}\|W_{S}f_\theta(\bbx_i)-y\| \right),
\end{align*}
where $\tYcal=\{y \in \Ycal : |\Ical_{y}| \neq 0\}$. Substituting these into equation \eqref{eq:6} yields
\begin{align} \label{eq:7} 
&\EE_{X,Y}[\|W _{S}f_\theta(X)-Y\|]   - \frac{1}{n} \sum_{i=1}^n \|W_{S}f_\theta(\bbx_i)-\bby_i\|
\\ \nonumber & = \frac{1}{n}\sum_{y \in  \tYcal}  |\Ical_{y}|\left(\EE_{X_{y}}[\|W _{S}f_\theta(X_{y})-y \|]-\frac{1}{|\Ical_{y}|}\sum_{i \in \Ical_{y}}\|W_{S}f_\theta(\bbx_i)-y\| \right)
\\ \nonumber & \quad + \sum_{y\in \Ycal} \EE_{X_{y}}[\|W _{S}f_\theta(X_{y})-y \|]\left(\Pr(Y =  y)- \frac{|\Ical_{y}|}{n}\right)
\end{align}  
Importantly, while $\|W_{S}f_\theta(\bbx_1)-\bby_1\|,\dots, \|W_{S}f_\theta(\bbx_n)-\bby_n\|$ on  the right-hand side of \eqref{eq:7} are dependent random variables, $\|W_{S}f_\theta(\bbx_1)-y\|,\dots,\|W_{S}f_\theta(\bbx_n)-y\|$ are independent random variables since $W_S$ and $\bbx_i$ are independent and $y$ is fixed here. Thus, by using Hoeffding's inequality (Lemma \ref{lemma:trivial:2}), and taking union bounds over $y \in \tYcal$, we have that with probability at least $1-\delta$,
the following holds for all $y \in \tYcal$:
$$
\EE_{X_{y}}[\|W _{S}f_\theta(X_{y})-y \|]-\frac{1}{|\Ical_{y}|}\sum_{i \in \Ical_{y}}\|W_{S}f_\theta(\bbx_i)-y\| \le \kappa_{S} \sqrt{\frac{\ln(|\tYcal|/\delta)}{2|\Ical_{y}|}}. 
$$
This implies that with probability at least $1-\delta$,
\begin{align*}
&\frac{1}{n}\sum_{y \in \tYcal}  |\Ical_{y}|\left(\EE_{X_{y}}[\|W _{S}f_\theta(X_{y})-y \|]-\frac{1}{|\Ical_{y}|}\sum_{i \in \Ical_{y}}\|W_{S}f_\theta(\bbx_i)-y\| \right) 
\\ & \le\frac{\kappa_{S}}{n}\sum_{y \in \tYcal}  |\Ical_{y}| \sqrt{\frac{\ln(|\tYcal|/\delta)}{2|\Ical_{y}|}}
\\ &=\kappa_{S} \left(\sum_{y \in \tYcal} \sqrt{\frac{|\Ical_{y}|}{n}}\right)   \sqrt{\frac{\ln(|\tYcal|/\delta)}{2n}}. 
\end{align*}
Substituting this bound into \eqref{eq:7}, we have that with probability at least $1-\delta$,
\begin{align} \label{eq:8} 
&\EE_{X,Y}[\|W _{S}f_\theta(X)-Y\|]   - \frac{1}{n} \sum_{i=1}^n \|W_{S}f_\theta(\bbx_i)-\bby_i\|
\\ \nonumber & \le \kappa_{S} \left(\sum_{y \in \tYcal} \sqrt{\hp(y)}\right)   \sqrt{\frac{\ln(|\tYcal|/\delta)}{2n}} 
  + \sum_{y\in \Ycal} \EE_{X_{y}}[\|W _{S}f_\theta(X_{y})-y \|]\left(\Pr(Y =  y)- \frac{|\Ical_{y}|}{n}\right)
\end{align} 
where 
$$
\hp(y)= \frac{|\Ical_{y}|}{n}.
$$
Moreover, for the second term on the right-hand side of \eqref{eq:8}, by using Lemma 1 of \citep{kawaguchi2022robust}, we have that with probability  at least $1-\delta$,
\begin{align*}
& \sum_{y\in \Ycal} \EE_{X_{y}}[\|W _{S}f_\theta(X_{y})-y \|]\left(\Pr(Y =  y)- \frac{|\Ical_{y}|}{n}\right) 
\\ &\le \left(\sum_{y\in \Ycal} \sqrt{p(y)}\EE_{X_{y}}[\|W _{S}f_\theta(X_{y})-y \| \right)   \sqrt{\frac{2\ln(|\Ycal|/\delta)}{2n}}
\\ & \le\kappa_{S} \left(\sum_{y\in \Ycal} \sqrt{p(y)} \right)   \sqrt{\frac{2\ln(|\Ycal|/\delta)}{2n}}
\end{align*} 
where $p(y)=\Pr(Y =  y)$.
Substituting this bound into \eqref{eq:8} with the union bound, we have that with probability at least $1-\delta$,
\begin{align} \label{eq:9} 
&\EE_{X,Y}[\|W _{S}f_\theta(X)-Y\|]   - \frac{1}{n} \sum_{i=1}^n \|W_{S}f_\theta(\bbx_i)-\bby_i\|
\\ \nonumber & \le \kappa_{S} \left(\sum_{y \in \tYcal} \sqrt{\hp(y)}\right)   \sqrt{\frac{\ln(2|\tYcal|/\delta)}{2n}} 
  +\kappa_{S} \left(\sum_{y\in \Ycal} \sqrt{p(y)} \right)   \sqrt{\frac{2\ln(2|\Ycal|/\delta)}{2n}}
  \\ \nonumber & \le \left(\sum_{y\in \Ycal} \sqrt{\hp(y)}\right)\kappa_{S}   \sqrt{\frac{2\ln(2|\Ycal|/\delta)}{2n}} 
  + \left(\sum_{y\in \Ycal} \sqrt{p(y)} \right) \kappa_{S}  \sqrt{\frac{2\ln(2|\Ycal|/\delta)}{2n}}   \\ \nonumber & \le\kappa_{S}   \sqrt{\frac{2\ln(2|\Ycal|/\delta)}{2n}}  \sum_{y\in \Ycal} \left(\sqrt{\hp(y)}+\sqrt{p(y)}\right) 
\end{align}
Combining \eqref{eq:1} and \eqref{eq:9} implies that with probability at least $1-\delta$,
\begin{align} \label{eq:10}
&\EE_{X,Y}[\|W _{S}f_\theta(X)-Y\|] 
\\ \nonumber  &\le  \frac{1}{n} \sum_{i=1}^n \|W_{S}f_\theta(\bbx_i)-\bby_i\|+\kappa_{S}   \sqrt{\frac{2\ln(2|\Ycal|/\delta)}{2n}}  \sum_{y\in \Ycal} \left(\sqrt{\hp(y)}+\sqrt{p(y)}\right)
  \\ \nonumber & \le L_{S}(w_{S})+\frac{1}{n} \sum_{i=1}^{n}\|\tW (f_\theta(\bbx_i)-f_\theta(x_i))\| 
\\ \nonumber & \quad + \frac{1}{n} \sum_{i=1}^n \|\varphi(x_{i})\|+\frac{1}{n} \sum_{i=1}^n\|\varphi(\bbx_i) \|+\kappa_{S}   \sqrt{\frac{2\ln(2|\Ycal|/\delta)}{2n}}  \sum_{y\in \Ycal} \left(\sqrt{\hp(y)}+\sqrt{p(y)}\right). 
\end{align}

We will now analyze  the term $\frac{1}{n} \sum_{i=1}^n \|\varphi(x_{i})\|+\frac{1}{n} \sum_{i=1}^n\|\varphi(\bbx_i) \|$ on the right-hand side of \eqref{eq:10}. Since $W^*=W_\bS$,
\begin{align*}
&\frac{1}{n} \sum_{i=1}^n \|\varphi(x_{i})\|=\frac{1}{n} \sum_{i=1}^n \|g^{*}(x_{i})-W_{\bS}f_\theta(x_{i})\|.
\end{align*}
By using Hoeffding's inequality (Lemma \ref{lemma:trivial:2}),  we have that for any $\delta>0$, with probability at least $1-\delta$, 
\begin{align*}
&\frac{1}{n} \sum_{i=1}^n \|\varphi(x_{i})\|\le \frac{1}{n} \sum_{i=1}^n \|g^{*}(x_{i})-W_{\bS}f_\theta(x_{i})\| \le \EE_{\xp}[\|g^{*}(\xp_{})-W_{\bS}f_\theta(\xp_{})\| ]+ \kappa_{\bS} \sqrt{\frac{\ln(1/\delta)}{2n} }.
\end{align*}
Moreover, by using  \citep[Theorem 3.1]{mohri2012foundations} with the  loss function $\xp \mapsto \|g^{*}(\xp_{})-Wf(\xp_{})\|$ (i.e., Lemma \ref{lemma:trivial:1}), we have that for any $\delta>0$, with probability at least $1-\delta$,
\begin{align} 
\EE_{\xp}[\|g^{*}(\xp_{})-W_{\bS}f_\theta(\xp_{})\| ]\le\frac{1}{m}\sum_{i=1}^{m} \|g^{*}(\xp_{i})-W_{\bS}f_\theta(\xp_{i})\|+\frac{2\tilde \Rcal_{m}(\Wcal \circ \Fcal)}{\sqrt{m}}+\kappa \sqrt{\frac{\ln(1/\delta)}{2m}} 
\end{align}
where $\tilde \Rcal_{m}(\Wcal \circ \Fcal)=\frac{1}{\sqrt{m}}\EE_{\bS,\xi}[\sup_{W\in \Wcal, f\in\Fcal} \sum_{i=1}^m \xi_i\|g^{*}(\xp_{i})-W_{}f(\xp_{i})\|]$ is the normalized Rademacher complexity of the set $\{\xp_{} \mapsto\|g^{*}(\xp_{})-W_{}f(\xp_{})\|:W \in \Wcal, f \in \Fcal\}$ (it is normalized such  that $\tilde \Rcal_{m}(\Fcal)=O(1)$  as $m\rightarrow \infty$
for typical choices of $\Fcal$), and $\xi_1,\dots,\xi_m$ are independent uniform random variables taking values in $\{-1,1\}$. Takinng union bounds, we have that for any $\delta>0$, with probability at least $1-\delta$, 
$$
\frac{1}{n} \sum_{i=1}^n \|\varphi(x_{i})\| \le\frac{1}{m}\sum_{i=1}^{m} \|g^{*}(\xp_{i})-W_{\bS}f_\theta(\xp_{i})\|+\frac{2\tilde \Rcal_{m}(\Wcal \circ \Fcal)}{\sqrt{m}}+\kappa \sqrt{\frac{\ln(2/\delta)}{2m}} + \kappa_{\bS} \sqrt{\frac{\ln(2/\delta)}{2n} }
$$ 
Similarly, for any $\delta>0$, with probability at least $1-\delta$, 
$$
\frac{1}{n} \sum_{i=1}^n \|\varphi(\bbx_{i})\| \le\frac{1}{m}\sum_{i=1}^{m} \|g^{*}(\xp_{i})-W_{\bS}f_\theta(\xp_{i})\|+\frac{2\tilde \Rcal_{m}(\Wcal \circ \Fcal)}{\sqrt{m}}+\kappa \sqrt{\frac{\ln(2/\delta)}{2m}} + \kappa_{\bS} \sqrt{\frac{\ln(2/\delta)}{2n}}.
$$
Thus, by taking union bounds, we have that for any $\delta>0$, with probability at least $1-\delta$,
\begin{align} \label{eq:18}
&\frac{1}{n} \sum_{i=1}^n \|\varphi(x_{i})\| +\frac{1}{n} \sum_{i=1}^n \|\varphi(\bbx_{i})\| \\ \nonumber & \le\frac{2}{m}\sum_{i=1}^{m} \|g^{*}(\xp_{i})-W_{\bS}f_\theta(\xp_{i})\|+\frac{4\Rcal_{m}(\Wcal \circ \Fcal)}{\sqrt{m}}+2\kappa \sqrt{\frac{\ln(4/\delta)}{2m}} + 2\kappa_{\bS} \sqrt{\frac{\ln(4/\delta)}{2n} }  
\end{align} 
To analyze the first term on the right-hand side of \eqref{eq:18},  recall that 
\begin{align} \label{eq:2}
W_{\bS} = \mini_{W'} \|W'\|_{F} \text{ s.t. } W'\in \argmin_{W} \frac{1}{m} \sum_{i=1}^{m} \|W f_\theta(\xp_{i})-g^*(\xp_i)\|^2 .
\end{align}
Here, since $W f_\theta(\xp_{i})\in\RR^r$, we have that 
$$
W f_\theta(\xp_{i}) = \vect[W f_\theta(\xp_{i})]=[f_\theta(\xp_{i})\T \otimes I_r]\vect[W]\in\RR^r,
$$ 
where $I_r \in \RR^{r \times r}
$ is the identity matrix, and $[f_\theta(\xp_{i})\T \otimes I_r]\in \RR^{r \times dr}$ is the  Kronecker product of the two matrices, and $\vect[W] \in \RR^{dr}$ is the vectorization of the matrix $W \in \RR^{r \times d}$.
Thus, by defining $A_i=[f_\theta(\xp_{i})\T \otimes I_r] \in \RR^{r \times dr}$  and using the notation of $w=\vect[W]$ and its inverse  $W=\vect^{-1}[w]$ (i.e., the inverse of the vectorization from $\RR^{r \times d}$ to $\RR^{dr}$ with a fixed ordering), we can rewrite \eqref{eq:2} by 
$$
W_\bS=\vect^{-1}[w_\bS] \quad \text{where } \quad w_\bS= \mini_{w'} \|w'\|_{F} \text{ s.t. } w'\in \argmin_{w} \sum_{i=1}^{m} \|g_{i}-A_iw\|^{2}, 
$$
with $g_i = g^{*}(\xp_{i}) \in \RR^r$. Since the function $w \mapsto \sum^{m}_{i=1} \|g_{i}-A_iw\|^{2}$ is convex, a necessary and sufficient condition of the minimizer of  this function is obtained by
$$
0 = \nabla_w \sum^{m}_{i=1} \|g_{i}-A_iw\|^{2}=2 \sum^{m}_{i=1} A_i\T (g_{i}-A_iw)\in \RR^{dr }
$$  
This implies that 
$$
\sum^{m}_{i=1}A_i\T A_iw= \sum^{m}_{i=1}A_i\T g_{i}.
$$
In other words, 
\begin{align*}
A\T A w=A\T g \quad \text{ where } A=\begin{bmatrix}A_{1} \\
A_{2} \\
\vdots \\
A_{m} \\
\end{bmatrix} \in \RR^{mr \times dr} \text{ and } g=\begin{bmatrix}g_{1} \\
g_{2} \\
\vdots \\
g_{m} \\
\end{bmatrix} \in \RR^{mr}
\end{align*}
Thus, 
$$
w'\in \argmin_{w} \sum_{i=1}^{m} \|g_{i}-A_iw\|^{2}= \{(A\T A)^\dagger A\T g+v: v \in \Null(A)\}
$$
where $(A\T A)^\dagger$ is the  Moore--Penrose inverse of the matrix  $A\T A$ and $\Null(A)$ is the null space of the matrix $A$. Thus, the minimum norm solution is obtained by 
$$
\vect[W_\bS]=w_\bS=(A\T A)^\dagger A\T g.
$$
Thus, by using this $W_\bS$, we have that 
\begin{align*}
\frac{1}{m}\sum_{i=1}^{m} \|g^{*}(\xp_{i})-W_{\bS}f_\theta(\xp_{i})\| &= \frac{1}{m}\sum_{i=1}^{m} \sqrt{\sum_{k=1}^r ((g_{i}-A_iw_\bS)_k)^2}
\\ & \le \sqrt{\frac{1}{m}\sum_{i=1}^{m} \sum_{k=1}^r ((g_{i}-A_iw_\bS)_k)^2} \\ & =  \frac{1}{\sqrt{m}} \sqrt{\sum_{i=1}^{m} \sum_{k=1}^r ((g_{i}-A_iw_\bS)_k)^2}
\\ & =\frac{1}{\sqrt{m}} \|g-Aw_\bS\|_{2}
\\ & = \frac{1}{\sqrt{m}} \|g-A(A\T A)^\dagger A\T g\|_{2} =\frac{1}{\sqrt{m}}\|(I-A(A\T A)^\dagger A\T )g\|_{2}
\end{align*}
where the inequality follows from the Jensen's inequality and the concavity of the square root function. 
Thus, we have that 
\begin{align} \label{eq:3}
&\frac{1}{n} \sum_{i=1}^n \|\varphi(x_{i})\| +\frac{1}{n} \sum_{i=1}^n \|\varphi(\bbx_{i})\| \\ \nonumber & \le\frac{2}{\sqrt{m}}\|(I-A(A\T A)^\dagger A\T )g\|_{2}+\frac{4\Rcal_{m}(\Wcal \circ \Fcal)}{\sqrt{m}}+2\kappa \sqrt{\frac{\ln(4/\delta)}{2m}} + 2\kappa_{\bS} \sqrt{\frac{\ln(4/\delta)}{2n} }  
\end{align}
By combining \eqref{eq:10} and  \eqref{eq:3} with union bound, we have that 
\begin{align} \label{eq:4}
&\EE_{X,Y}[\|W _{S}f_\theta(X)-Y\|] 
  \\ \nonumber & \le L_{S}(w_{S})+\frac{1}{n} \sum_{i=1}^{n}\|\tW (f_\theta(\bbx_i)-f_\theta(x_i))\|+\frac{2}{\sqrt{m}}\|\Pb_{A}g\|_{2} 
\\ \nonumber & \quad + \frac{4\Rcal_{m}(\Wcal \circ \Fcal)}{\sqrt{m}}+2\kappa \sqrt{\frac{\ln(8/\delta)}{2m}} + 2\kappa_{\bS} \sqrt{\frac{\ln(8/\delta)}{2n} }
\\ \nonumber  &\quad +\kappa_{S}   \sqrt{\frac{2\ln(4|\Ycal|/\delta)}{2n}}  \sum_{y\in \Ycal} \left(\sqrt{\hp(y)}+\sqrt{p(y)}\right). 
\end{align}
where $\tW=W_S-W^*$ and $   \Pb_{A}=I-A(A\T A)^\dagger A\T$.

We will now analyze the second term on the right-hand side of \eqref{eq:4}:
\begin{align} \label{eq:11}
\frac{1}{n} \sum_{i=1}^{n}\|\tW (f_\theta(\bbx_i)-f_\theta(x_i))\| \le \|\tW\|_{2}\ \left(\frac{1}{n} \sum_{i=1}^{n}\|f_\theta(\bbx_i)-f_\theta(x_i)\| \right),
\end{align} 
where $\|\tW\|_{2}$ is the spectral norm of $\tW$. Since $\bbx_i$ shares the same label with $x_i$ as  $\bbx_i \sim\Dcal_{y_i}$ (and $x_i \sim\Dcal_{y_i}$), and because $f_\theta$ is trained with the unlabeled data $\bS$,  using Hoeffding's inequality (Lemma \ref{lemma:trivial:2}) implies that with probability at least $1-\delta$, 
\begin{align} \label{eq:12}
\frac{1}{n} \sum_{i=1}^{n}\|f_\theta(\bbx_i)-f_\theta(x_i)\| \le \EE_{y \sim \rho}\EE_{\bbx,x \sim \Dcal_y^2}[\|f_\theta(\bbx)-f_\theta(x)\|]+\tau_{\bS} \sqrt{\frac{\ln(1/\delta)}{2n}}. 
\end{align}
Moreover, by using  \citep[Theorem 3.1]{mohri2012foundations} with the  loss function $(x,\bbx) \mapsto \|f_\theta(\bbx)-f_\theta(x)\|$ (i.e., Lemma \ref{lemma:trivial:1}), we have that with probability at least $1-\delta$,
\begin{align} 
\EE_{y \sim \rho}\EE_{\bbx,x \sim \Dcal_y^2}[\|f_\theta(\bbx)-f_\theta(x)\|] \le\frac{1}{m}\sum_{i=1}^{m} \|f_\theta(\xp_{i})-f_\theta(\xpp_{i})\|+\frac{2\tilde \Rcal_{m}(\Fcal)}{\sqrt{m}}+\tau \sqrt{\frac{\ln(1/\delta)}{2m}} 
\end{align}
where $\tilde \Rcal_{m}(\Fcal)=\frac{1}{\sqrt{m}}\EE_{\bS,\xi}[\sup_{f\in\Fcal} \sum_{i=1}^m \xi_i \|f(\xp_{i})-f(\xpp_{i})\|]$ is the normalized Rademacher complexity of the set $\{(\xp_{},\xpp_{}) \mapsto\|f(\xp_{})-f(\xpp_{})\|: f \in \Fcal\}$ (it is normalized such  that $\tilde \Rcal_{m}(\Fcal)=O(1)$  as $m\rightarrow \infty$
for typical choices of $\Fcal$), and $\xi_1,\dots,\xi_m$ are independent uniform random variables taking values in $\{-1,1\}$. Thus, taking union bound, we have that for any $\delta>0$, with probability at least $1-\delta$,
\begin{align} \label{eq:13}
&\frac{1}{n} \sum_{i=1}^{n}\|\tW (f_\theta(\bbx_i)-f_\theta(x_i))\|
\\ \nonumber & \le\|\tW\|_{2}\left(\frac{1}{m}\sum_{i=1}^{m} \|f_\theta(\xp_{i})-f_\theta(\xpp_{i})\|+\frac{2\tilde \Rcal_{m}(\Fcal)}{\sqrt{m}}+\tau \sqrt{\frac{\ln(2/\delta)}{2m}}++\tau_{\bS} \sqrt{\frac{\ln(2/\delta)}{2n}}\right). 
\end{align}

By combining \eqref{eq:4} and \eqref{eq:13} using the union bound, we have that with probability at least $1-\delta$,
\begin{align} \label{eq:14}
&\EE_{X,Y}[\|W _{S}f_\theta(X)-Y\|]
\\ \nonumber &\le L_{S}(w_{S}) + \|\tW\|_{2} \left( \frac{1}{m}\sum_{i=1}^{m} \|f_\theta(\xp_{i})-f_\theta(\xpp_{i})\|+\frac{2\tilde \Rcal_{m}(\Fcal)}{\sqrt{m}}+\tau \sqrt{\frac{\ln(4/\delta)}{2m}}+\tau_{\bS} \sqrt{\frac{\ln(4/\delta)}{2n}} \right)
\\ \nonumber & \quad +\frac{2}{\sqrt{m}}\|\Pb_{A}g\|_{2}+ \frac{4\Rcal_{m}(\Wcal \circ \Fcal)}{\sqrt{m}}+2\kappa \sqrt{\frac{\ln(16/\delta)}{2m}} + 2\kappa_{\bS} \sqrt{\frac{\ln(16/\delta)}{2n} }
\\ \nonumber & \quad +\kappa_{S}   \sqrt{\frac{2\ln(8|\Ycal|/\delta)}{2n}}  \sum_{y\in \Ycal} \left(\sqrt{\hp(y)}+\sqrt{p(y)}\right)
 \\ \nonumber & =L_{S}(w_{S}) +\|\tW\|_{2} \left(\frac{1}{m}\sum_{i=1}^{m} \|f_\theta(\xp_{i})-f_\theta(\xpp_{i})\|\right)+\frac{2}{\sqrt{m}}\|\Pb_{A}g\|_{2}+Q_{m,n} 
\end{align}
where
\begin{align*}
Q_{m,n} &= \|\tW\|_{2} \left(\frac{2\tilde \Rcal_{m}(\Fcal)}{\sqrt{m}}+\tau \sqrt{\frac{\ln(3/\delta)}{2m}}+\tau_{\bS} \sqrt{\frac{\ln(3/\delta)}{2n}}\right)
\\ & \quad +\kappa_{S}   \sqrt{\frac{2\ln(6|\Ycal|/\delta)}{2n}}  \sum_{y\in \Ycal} \left(\sqrt{\hp(y)}+\sqrt{p(y)}\right)
  \\ & \quad + \frac{4\Rcal_{m}(\Wcal \circ \Fcal)}{\sqrt{m}}+2\kappa \sqrt{\frac{\ln(4/\delta)}{2m}} + 2\kappa_{\bS} \sqrt{\frac{\ln(4/\delta)}{2n} }.
\end{align*}
Define $\hZ =[f(\xp_1),\dots, f(\xp_{m})] \in \RR^{d\times m}$. Then, we have $A=[\hZ \T \otimes I_r]$. Thus, 
$$
\Pb_{A}=I-[\hZ \T \otimes I_r][\hZ \hZ \T \otimes I_r]^\dagger[\hZ \otimes I_r]=I-[\hZ \T (\hZ \hZ \T)^\dagger \hZ \otimes I_r]=[\Pb_\hZ  \otimes I_r] 
$$
where $\Pb_\hZ  = I_{m}-\hZ \T (\hZ \hZ \T)^\dagger \hZ  \in \RR^{m \times m}$. By defining $Y_\bS=[g^*(\xp_1),\dots, g^*(\xp_{m})]\T \in \RR^{m\times r}$, since $g=\vect[Y_\bS\T]$,
\begin{align} \label{eq:15}
\| \Pb_{A}g\|_{2} =\|[\Pb_\hZ  \otimes I_r]\vect[Y_\bS\T]\|_{2} =\|\vect[Y_\bS\T\Pb_\hZ ]\|_{2} =\|\Pb_\hZ Y_\bS\|_{F}    
\end{align}
On the other hand, recall that $W_{S}$ is the minimum norm solution as 
$$
W_S =\mini_{W'} \|W'\|_{F} \text{ s.t. } W' \in \argmin_{W} \frac{1}{n} \sum_{i=1}^n \|W f_\theta(x_i)-y_i\|^2.
$$
By solving this,  we have  
$$
W_S =Y\T \tZ  \T (\tZ \tZ\T )^\dagger ,     
$$
where $\tZ =[f(x_1),\dots, f(x_{n})] \in \RR^{d\times n}$ and $Y_{S}=[y_{1},\dots, y_{n}]\T \in \RR^{n\times r}$. Then,
\begin{align*}
L_{S}(w_{S})=\frac{1}{n} \sum_{i=1}^n \|W_{S} f_\theta(x_i)-y_i\| &= \frac{1}{n}\sum_{i=1}^{n} \sqrt{\sum_{k=1}^r ((W_{S} f_\theta(x_i)-y_i)_k)^2} 
\\ & \le \sqrt{\frac{1}{n}\sum_{i=1}^{n}\sum_{k=1}^r ((W_{S} f_\theta(x_i)-y_i)_k)^2} \\ & =  \frac{1}{\sqrt{n}}  \|W_{S} \tZ -Y\T\|_F
\\ & =\frac{1}{\sqrt{n}}  \|Y\T (\tZ  \T (\tZ \tZ\T )^\dagger \tZ -I)\|_F
\\ & =\frac{1}{\sqrt{n}}  \|(I-\tZ  \T (\tZ \tZ\T )^\dagger \tZ )Y\|_F 
\end{align*} 
Thus, 
\begin{align} \label{eq:17}
L_{S}(w_{S})=\frac{1}{\sqrt{n}}  \|\Pb_\tZ Y\|_F
\end{align}
where $\Pb_\tZ   = I-\tZ  \T (\tZ \tZ\T )^\dagger \tZ $.

By combining \eqref{eq:14}--\eqref{eq:17} and using $1\le \sqrt{2}$, we have that with probability at least $1-\delta$,
\begin{align} 
\EE_{X,Y}[\|W _{S}f_\theta(X)-Y\|] \le c I_{\bS}(f_\theta)+\frac{2}{\sqrt{m}}\|\Pb_\hZ Y_\bS\|_{F}+\frac{1}{\sqrt{n}}  \|\Pb_\tZ Y_{S}\|_F +Q_{m,n}, 
\end{align}
where
\begin{align*}
Q_{m,n} &=  c \left(\frac{2\tilde \Rcal_{m}(\Fcal)}{\sqrt{m}}+\tau \sqrt{\frac{\ln(3/\delta)}{2m}}+\tau_{\bS} \sqrt{\frac{\ln(3/\delta)}{2n}}\right)
\\ & \quad +\kappa_{S}   \sqrt{\frac{2\ln(6|\Ycal|/\delta)}{2n}}  \sum_{y\in \Ycal} \left(\sqrt{\hp(y)}+\sqrt{p(y)}\right)
 \\ & \quad + \frac{4\Rcal_{m}(\Wcal \circ \Fcal)}{\sqrt{m}}+2\kappa \sqrt{\frac{\ln(4/\delta)}{2m}} + 2\kappa_{\bS} \sqrt{\frac{\ln(4/\delta)}{2n} }. 
\end{align*}
\end{proof}

\renewcommand{\thesection}{\Alph{section}}

\setcounter{section}{9} % Since you want it to start from J, which is the 10th letter.
Now,

\section{Information Optimization and the VICReg Objective}
\label{app:info_vicgreg}

\begin{assumption}
The eigenvalues of $\Sigma({x_j})$ are in some range $a\leq \lambda(\Sigma(x_j))\leq b$.
\end{assumption}

\begin{assumption}
The differences between the means of the Gaussians are bounded  $$M=\max_{i,j} \left\| \mu(X_i) - \mu(X_j) \right\|^2$$
\end{assumption}

\begin{lemma}
The maximum eigenvalue of each \(\mu({X_j}) \mu({X_j})^T\) is at most \(M\).
\end{lemma}

\begin{proof}
The term \(\mu({X_j}) \mu({X_j})^T\) is an outer product of the mean vector \(\mu({X_j})\), which is a symmetric matrix. The eigenvalues of a symmetric matrix are equal to the squares of the singular values of the original matrix. Since the singular values of a vector are equal to its absolute values, the maximum eigenvalue of \(\mu({X_j}) \mu({X_j})^T\) is equal to the square of the maximum absolute value of \(\mu({X_j})\). By the second assumption, this is at most \(M\).
\end{proof}

\begin{lemma}
The maximum eigenvalue of \(-\mu_Z \mu_Z^T\) is non-positive and its absolute value is at most \(M\).
\end{lemma}

\begin{proof}
The term \(-\mu_Z \mu_Z^T\) is a negative outer product of the overall mean vector \(\mu_Z\), which is a symmetric matrix. Its eigenvalues are non-positive and equal to the negative squares of the singular values of \(\mu_Z\). Since the singular values of a vector are equal to its absolute values, the absolute value of the maximum eigenvalue of \(-\mu_Z \mu_Z^T\) is equal to the square of the maximum absolute value of \(\mu_Z\), which is also bounded by \(M\) by the second assumption.
\end{proof}

\begin{lemma}
The sum of the eigenvalues of  $\Sigma_Z$ is bounded $$\sum_i \lambda_i(Z) \leq  (b + M)\times K$$
\end{lemma}
\begin{proof}
Given a Gaussian mixture model where each component $Z|x_j$ has mean $\mu({X_j})$ and covariance matrix $\Sigma({x_j})$, the mixture can be written as:

\[
Z = \sum_j p_j Z|x_j
\]

where $p_j$ are the mixing coefficients. The covariance matrix of the mixture, $\Sigma_Z$, is then given by:

\[
\Sigma_Z = \sum_j p_j \left( \Sigma({x_j}) + \mu({X_j}) \mu({X_j})^T \right) - \mu_Z \mu_Z^T
\]

where $\mu_Z$ is the mean of the mixture distribution.

By Lemmas 1.1, 1.2, and assumptions 1 and 2, the maximum eigenvalues of
$(\Sigma({x_j})$, $\mu({X_j}) \mu({X_j})^T$ and $\mu_Z \mu_Z^T$. are at most \(b\), \(M\), and \(M\), respectively. Therefore, by Weyl's inequality for the sum of two symmetric matrices, the maximum eigenvalue of \(\Sigma_Z\) is at most \(b + M\). $$\lambda_{max}(\Sigma_Z) \leq \frac{1}{K} \sum_{i=1}^K (max(\lambda({\Sigma({X_i}}))) + M) \leq b + M$$

It means that we can bound the sum of the eigenvalues of $\Sigma_Z$ with
\[\sum_i\lambda_i(\Sigma_Z) \leq (b+M)\times K\]

\end{proof}

\begin{lemma}
    
Let $\Sigma_Z$ be a positive semidefinite matrix of size $N \times N$. Consider the optimization problem given by:
\begin{align*}
& \text{maximize } \log\det(\Sigma_Z) \\ \text{ such that:} \\
& \sum_{i=1}^{N} \lambda_i(\Sigma_Z) \leq c  \\
& \Sigma_Z \succeq 0
\end{align*}
where $\lambda_i(\Sigma_Z)$ denotes the $i$-th eigenvalue of $\Sigma_Z$ and $c$ is a constant. The solution to this problem is a diagonal matrix with equal diagonal elements.
\end{lemma}

\begin{proof}
The determinant of a matrix is the product of its eigenvalues, so the objective function $\log\det(\Sigma_Z)$ can be rewritten as $\sum_{i=1}^N \log(\lambda_i(\Sigma_Z))$. Our problem is then to maximize this sum under the constraints that the sum of the eigenvalues does not exceed $c$ and that $\Sigma_Z$ is positive semi-definite.

Applying Jensen's inequality to the concave function $\log(x)$ with weights $1/N$, we find that $\frac{1}{N}\sum_{i=1}^N \log(\lambda_i(\Sigma_Z)) \leq \log(\frac{1}{N}\sum_{i=1}^N \lambda_i(\Sigma_Z))$. Equality holds if and only if all $\lambda_i(\Sigma_Z)$ are equal.

Setting $\lambda_i(\Sigma_Z) = x$ for all $i$, we see that the constraint $\sum_{i=1}^{N} \lambda_i(\Sigma_Z) \leq c$ becomes $Nx \leq c$, leading to the optimal eigenvalue $x = c/N$ under the constraint.

Since $\Sigma_Z$ is positive semi-definite, it can be diagonalized via an orthogonal transformation without changing the sum of its eigenvalues or its determinant. Therefore, the solution to the problem is a diagonal matrix with all diagonal entries equal to $c/N$.

This completes the proof.
\end{proof}

\begin{theorem}
Given a Gaussian mixture model where each component \(Z|X_i\) has covariance matrix \(\Sigma({X_i})\), under the assumptions above, the solution to the optimization problem
\begin{align*}
 & \text{maximize } \sum_i \log \frac{\left| \Sigma_Z \right|}{\left| \Sigma({X_i}) \right|}
\end{align*}
is a diagonal matrix \(\Sigma_Z\) with equal diagonal elements.
\end{theorem}

\begin{proof}
The objective function can be decomposed as follows: 
\begin{align*}
\sum_i \log \frac{\left| \Sigma_Z \right|}{\left| \Sigma({X_i}) \right|} &= \sum_i \left( \log \left| \Sigma_Z \right| - \log \left| \Sigma({X_i}) \right| \right) \\
&= K \log \left| \Sigma_Z \right| - \sum_i \log \left| \Sigma_({X_i}) \right|,
\end{align*}
where \(K\) is the number of components in the Gaussian mixture model. 

In this optimization problem, we are optimizing over \(\Sigma_Z\). The term \(\sum_i \log \left| \Sigma({X_i})\right|\) is constant with respect to \(\Sigma_Z\), therefore we can focus on maximizing \(K \log \left| \Sigma_Z \right|\). 

As the determinant of a matrix is the product of its eigenvalues, \(\log \left| \Sigma_Z \right|\) is the sum of the logs of the eigenvalues of \(\Sigma_Z\). Thus, maximizing \(\log \left| \Sigma_Z \right|\) corresponds to maximizing the sum of the logarithms of the eigenvalues of \(\Sigma_Z\). 

According to Lemma 1.4, when we have a constraint on the sum of the eigenvalues, the solution to the problem of maximizing the sum of the logarithms of the eigenvalues of a positive semidefinite matrix \(\Sigma_Z\) is a diagonal matrix with equal diagonal elements. 

From Lemma 1.3, we know that the sum of the eigenvalues of \(\Sigma_Z\) is bounded by \((b + M) \times K\). Therefore, when we maximize \(K \log \left| \Sigma_Z \right|\) under these constraints, the solution will be a diagonal matrix with equal diagonal elements. This completes the proof of the theorem.
\end{proof}

\section{Entropy Comparison - Experimental Details}
\label{app:training_detalies}
We use ResNet-18 \citep{he2016deep} as our backbone. Each model is trained with $512$ batch size for $800$ epochs. We use the SGD optimizer with a momentum of 0.9 and a weight decay of $1e^{-4}$. The initial learning rate is $0.5$. This learning rate follows the cosine decay with a linear warmup schedule. For augmentation,  two augmented versions of each input image are generated. During this process, each image is cropped with random size, and resized to the original resolution, followed by random applications of horizontal mirroring, color jittering, grayscale conversion,
Gaussian blurring and solarization. For the entropy estimation, we use the same method as in \cite{kolchinsky2017estimating}, which uses a lower bound of the entropy using the distances of the representations under the assumption of a mixture of the Gaussians around the representations with constant variance.
\begin{align}
    H(Z) := H(Z|C) - \sum_{i} c_i \ln \sum_{j} c_j e^{-D(p_i || p_j)}.
\end{align}
where $p_i$ and $p_j$ are the distributions of the representation of the i-th and j-th examples, and $D(p_i || p_j)$ represents the divergence between these distributions. Also, $c_i$ indicates the weight of component $i$ ($c_i\geq0$,
$\sum_i c_i = 1$), and $C$ is a discrete random variable where $P(C=i)=c_i$ .

\section{Reproducibility Statement}
\label{app:repr}
All of the methods in our study are based on existing methods and their open-source implementations. We provide a detailed implementation setup for both the pre-training and downstream experiments. 
\section{Expriemtns on the generalization bound}
\label{app:generalization}
In this section we have more empirical results on the connection between our generalization to the generalization gap.

\begin{figure*}[t]
  \centering
  \begin{tabular}{c@{~~}c@{~~}c}     
     % \rotatebox[origin=c]{90}{{\tiny Width}} &
     {\includegraphics[width=0.45\linewidth]{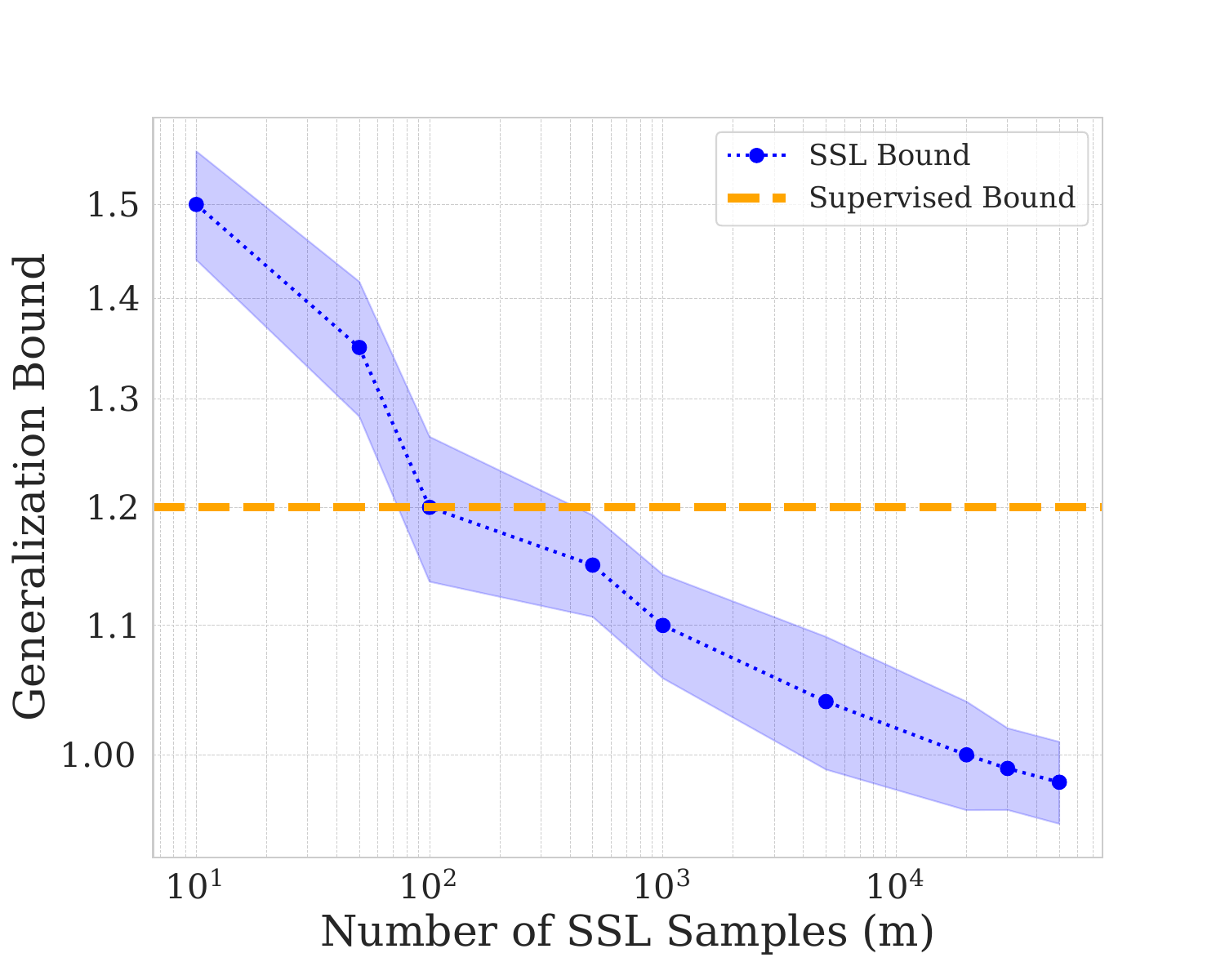}} &
     {\includegraphics[width=0.45\linewidth]{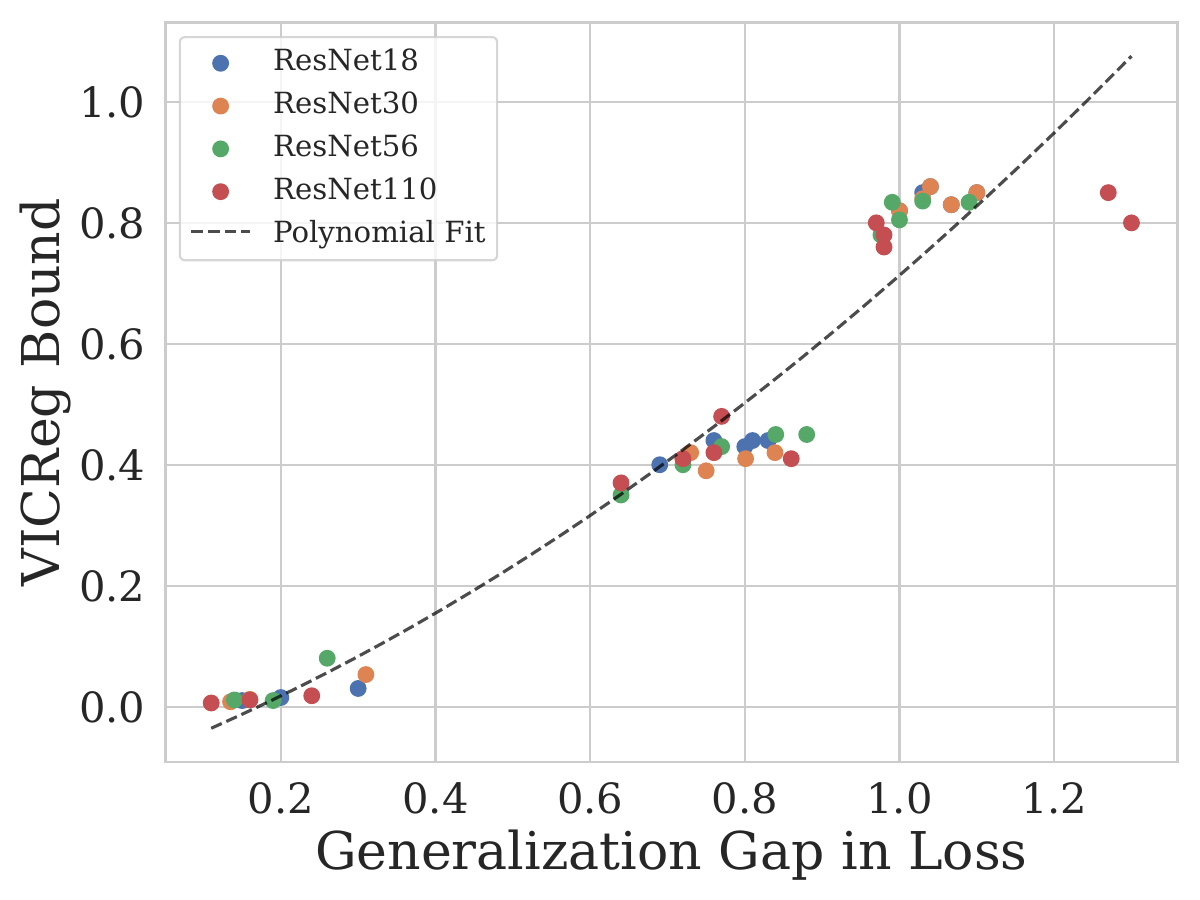}} \\
     
  \end{tabular}
   \caption{{\bf Our generalization bound predicts more accurately the generalization gap in the loss.} {\bf (left)} Our SSL VICReg generalization bound outperforms state-of-the-art supervised generalization bounds. {\bf (right)} Strong correlation between the generalization gap and our generalization bound for VICReg. Pearson correlation - 0.9633.  Conducted on CIFAR-10.}
 \label{fig:ResNet_vs_ViT}
 \end{figure*}

\section {Limitations of assuming inherent network randomness}
\label{app:limitation_random}
First, we provide a brief overview of the challenges associated with estimating mutual information (MI) in DNNs. This context will set the stage for a detailed discussion on the advantages of our method

Estimating MI in DNNs poses various practical challenges. For example, deterministic DNNs with strict monotonic nonlinearities yield infinite or constant MI, complicating optimization \cite{amjad2019learning} or making optimization ill-posed. One solution is to assume an unknown distribution over the representation and estimating MI directly \cite{belghazi2018mine,vmibounds}. However, this is challenging theoretically and requires large sample sizes \cite{paninski2003estimation,mcallester2020formal} and is prone to significant estimation errors \cite{tschannen2019mutual}.

To address these issues, prior works have proposed various assumptions about randomness in neural networks. For example, \cite{achille2019information} posits noise in DNNs' parameters, while [1] assumes noise in the representations. These assumptions are valid if they outperform deterministic DNNs or yield similar representations to those trained in practice.

Our empirical evidence suggests that noise in the input has much better performance and representations similar to the deterministic network compared to noise in the network itself. We compared our model, which introduces noise at the input level, with models that add noise to network representations \cite{goldfeld2018estimating}, using CIFAR-100 and Tiny-ImageNet and VICReg with ConvNeXt. The results, tabulated below, indicate that the noise injected into the representation has much worse performance degradation compared to our input noise model.

\begin{table}[h]
\centering
\caption{Comparison of Noisy Network and Noisy Input across different datasets for different noise levels}
\label{tab:comparison}
\begin{tabular}{@{}cccccc@{}}
\toprule
$\beta$ (Noise Level) & \multicolumn{2}{c}{Tiny-ImageNet} & \multicolumn{2}{c}{CIFAR100} \\ \cmidrule(lr){2-3} \cmidrule(l){4-5} 
 & Noisy Network & Noisy Input (ours) & Noisy Network & Noisy Input (ours) \\ \midrule
$\beta=0$ (no noise) & 53.1 & 53.1 & 70.1 & 70.1 \\
$\beta=0.05$          & 51.7 & 53.0 & 69.7 & 70.0 \\
$\beta=0.1$           & 50.2 & 52.8 & 68.8 & 69.6 \\
$\beta=0.2$           & 48.1 & 52.3 & 67.1 & 68.9 \\
\bottomrule
\end{tabular}
\end{table}

Our approach, focusing on input noise, does not necessitate explicit noise injection during training---unlike methods assuming inherent network noise.

Lastly, we validate that the assumption of input noise leads to better alignment with deterministic networks. Following [14], we used Centered Kernel Alignment (CKA) \cite{cortes2012algorithms} to compare deterministic DNNs with our noise model (input noise) to DNNs with noise in the representations \cite{goldfeld2018estimating}. The results, also tabulated below, confirm that assuming input noise aligns more closely with deterministic DNN representations than assuming noise in the DNN parameters.

\begin{table}[h]
\centering
\caption{Performance comparison of different methods at varying noise levels}
\label{tab:performance}
\begin{tabular}{@{}lccc@{}}
\toprule
Noise Level/Method & Deterministic Network & Noisy Network & Noisy Input (our method) \\ \midrule
$\beta=0.05$        & 0.97                  & 0.82          & 0.93                      \\
$\beta=0.1$         & 0.97                  & 0.69          & 0.85                      \\
$\beta=0.2$         & 0.97                  & 0.54          & 0.77                      \\
$\beta=0.3$         & 0.97                  & 0.32          & 0.69                      \\
\bottomrule
\end{tabular}
\end{table}

In summary, our input noise model more effectively captures deterministic DNN behavior and is more robust than assuming inherent network randomness.

\section {Boader Impact}
\label{app:broader_impact}
In the landscape of machine learning,  SSL algorithms have found a broad array of practical applications, ranging from image recognition to natural language processing. With the proliferation of data in recent years, their utility has grown exponentially, often serving as key components in numerous real-world use cases. This paper introduces a new family of SSL algorithms that have demonstrated superior performance. Given the widespread use of SSL algorithms, the advancements proposed in this paper have the potential to considerably enhance the effectiveness of systems that rely on these techniques. As a result, the impact of these improved algorithms could be far-reaching, potentially influencing a multitude of applications and sectors where machine learning is currently employed. With this potential for value also comes the potential that proposed methods make false promises that will not benefit real-world practitioners and may, in fact, cause harm when deployed in sensitive applications.  For this reason, we release our numerical results and implementation details for the sake of transparency and reproducibility.  As with all new state-of-the-art methods, our improvements may also improve models used for malicious intentions.

\end{appendices}

\end{document}